\newtheorem{theorem}{Theorem}
\newtheorem{example}{Example}%
\newtheorem*{theorem*}{Theorem}
\newtheorem{definition}{Definition}%
\newtheorem{lemma}[theorem]{Lemma}
\numberwithin{equation}{section}
\begin{document}

\title{The Boundary Reproduction Number for Determining Boundary Steady State Stability in Chemical Reaction Systems}

\author{%
  Matthew D. Johnston\thanks{Department of Mathematics and Computer Science, Lawrence Technological University, 21000 W 10 Mile Rd, Southfield, MI 48075, USA, \tt{${}^{*}$mjohnsto1@ltu.edu}} \and Florin Avram\thanks{Laboratoire de Math\'{e}matiques Appliqu\'{e}es, Universit\'{e} de Pau, 64000, Pau, France}\\ \\
}




\maketitle

\begin{abstract}
We introduce the boundary reproduction number, adapted from the next generation matrix method, to assess whether an infusion of species will persist or become exhausted in a chemical reaction system. Our main contributions are as follows: (a) we show how the concept of a siphon, prevalent in Petri nets and chemical reaction network theory, identifies sets of species that may become depleted at steady state, analogous to a disease-free boundary steady state; (b) we develop an approach for incorporating biochemically motivated conservation laws, which allows the stability of boundary steady states to be determined within specific compatibility classes; and (c) we present an effective heuristic for decomposing the Jacobian of the system that reduces the computational complexity required to compute the stability domain of a boundary steady state. The boundary reproduction number approach significantly simplifies existing parameter-dependent methods for determining the stability of boundary steady states in chemical reaction systems and has implications for the capacity of critical metabolites and substrates in metabolic pathways to become exhausted.
\end{abstract}



\section{Introduction}
\label{sec:introduction}

There has been significant research interest in the areas of mathematical biochemistry and mathematical epidemiology in the capacity of species to tend toward zero \cite{martcheva2015,feinberg2019foundations}. In mathematical biochemistry, a chemical species tending to zero may correspond to a catalyst or metabolite being depleted, potentially shutting down a critical metabolic pathway. In mathematical epidemiology, a compartment tending to zero may indicate the extinction of an infectious disease, such as a virus or bacterial infection, and is therefore crucial to our understanding of disease management. Despite the importance of the stability of boundary steady states in both disciplines, and the significant similarities in the modeling frameworks, little work has been done to extend techniques from one area to the other. 

In recent years, researchers in chemical reaction network theory have focused on identifying \emph{parameter-independent structural properties} of the underlying interaction network that guarantee persistence and permanence in chemical reaction systems—that is, conditions ensuring that no initially positive trajectory approaches the boundary \cite{craciun2013persistence,brunner2018robust,craciun2019polynomial}. Examples of well-studied structural properties related to persistence include complex-balancing \cite{horn1972general,anderson2008global,craciun2019polynomial}, weak reversibility \cite{anderson2010dynamics,anderson2011proof,deng2011steady,august2010solutions}, dynamical nonemptiability \cite{angeli2007petri,johnston2011weak}, endotacticity \cite{craciun2013persistence,gopalkrishnan2014geometric,johnston2016computational,craciun2020endotactic,anderson2020tier}, and monotonicity \cite{angeli2006structural,angeli2008translation,leenheer2007monotone,wang2008singularly}. These studies aim to establish conditions under which persistence or permanence holds \emph{for all parameter values} and all \emph{stoichiometric compatibility classes}; they do not address systems for which the boundary is attracting for some parameter values or compatibility classes but repelling for others.

Research in mathematical epidemiology, conversely, has largely focused on characterizing the \emph{parameter-dependent} \emph{basic reproduction number}, $\mathscr{R}_0$, which is the expected number of new cases generated by a typical infected individual in a fully susceptible population \cite{diekmann1990definition}. The basic reproduction number is straightforward to interpret: if $\mathscr{R}_0 > 1$, the disease will spread; if $\mathscr{R}_0 < 1$, the disease will die out. Mathematically, this represents a bifurcation point at which the stability of a boundary steady state changes. Mathematical research has focused on methods for efficiently computing $\mathscr{R}_0$, such as the next generation matrix method \cite{VANDENDRIESSCHE2002,Heffernan2005,van2008further}. This method has been applied to a variety of disease models, including multi-strain models \cite{lazebnik2022generic}, vector-borne disease models \cite{van2017reproduction,van2017basic}, and models with vaccination \cite{ehrhardt2019sir,johnston2025effect}.

In this paper, we introduce the boundary reproduction number of a boundary steady state in mathematical biochemistry. This concept extends and formalizes recent work in \cite{Avram2025StabilityIR}, which heuristically applied the next generation method from mathematical epidemiology to biochemical examples. The boundary reproduction number approach provides a significant computational advantage for determining parameter-dependent stability thresholds for boundary steady states, as compared to existing direct methods such as computing the eigenvalues of the Jacobian or applying the Routh-Hurwitz conditions \cite{routh1877treatise,hurwitz1895ueber}. We present three results supporting the boundary reproduction number framework: a direct method (Theorem \ref{thm:main}), a matrix-based method (Theorem \ref{thm:2}), and a network-structure-based method (Theorem \ref{thm:3}).

To illustrate the boundary reproduction number method, consider the following network (left) and corresponding mass-action system (right), simplified from an EnvZ-OmpR network presented in \cite{shinar2010structural}:
\begin{equation}
\label{EnvZOmpr-simplified}
\begin{tikzcd}
X \arrow[rr,"k_1"] & & X_t \arrow[rr,"k_2"] & & X_p \\
X_p + Y \arrow[rr,"k_3"] & & X_pY \arrow[rr,"k_4"] & & X + Y_p \\
X_t + Y_p \arrow[rr,"k_5"] & & X_tY_p \arrow[rr,"k_6"] & &  X_t + Y
\end{tikzcd} \; \; \; \; \; \; \; \; \;
\left\{ \begin{aligned}
x' & = -k_1x+k_4x_py \\
x_t' & = k_1x-k_2x_t - k_5 x_t \cdot y_p + k_6 x_ty_p\\
x_p' & = k_2x_t-k_3x_p \cdot y\\
y' & = -k_3x_p \cdot y+k_6x_ty_p\\
x_py' & = k_3 x_p \cdot y - k_4 x_py\\
y_p' & = k_4x_py-k_5x_t \cdot y_p \\
x_ty_p' & = k_5 x_t \cdot y_p - k_6 x_ty_p
\end{aligned}\right.
\end{equation}
Note that this system cannot be interpreted as a model of infectious disease spread because it includes synthesis (e.g., $X_1 + X_2 \to X_3$) and disassociation reactions (e.g., $X_1 \to X_2 + X_3$). 
Nevertheless, the boundary reproduction number method can still be applied. The network \eqref{EnvZOmpr-simplified} has a \emph{critical siphon} $\mathcal{X} = \{ X, X_t, Y, X_pY, X_tY_p \}$ corresponding to the set of boundary steady states $\mathbf{x}^* = (x, x_t, x_p, y, x_py, y_p, x_ty_p) = (0, 0, X_{tot}, 0, 0, Y_{tot}, 0)$. The boundary reproduction number of $\mathbf{x}^*$ can be computed as $\displaystyle \mathscr{R}_{\mathbf{x}^*} = \rho(FV^{-1}) = k_5Y_{tot}/k_2$, which depends on both the rate parameters $k_2$ and $k_5$ and the conservation constant $Y_{tot}$. We show that the boundary steady state is unique within each stoichiometric compatibility class, is locally asymptotically stable within that class if $\displaystyle \mathscr{R}_{\mathbf{x}^*} < 1$ (i.e., $\displaystyle Y_{tot} < k_2/k_5$), and unstable within that class if $\mathscr{R}_{\mathbf{x}^*} > 1$ (i.e., $\displaystyle Y_{tot} > k_2/k_5$).

\section{Background}
\label{sec:background}

In this section, we introduce the background knowledge on mathematical models of chemical reaction networks.

\subsection{Chemical Reaction Networks}
\label{sec:crn}

The following concept is the foundational object of study in this paper \cite{horn1972general,horn1972necessary,feinberg1972chemical,feinberg2019foundations}.

\begin{definition}
\label{def:crn}
A \textbf{chemical reaction network} $(\mathcal{S}, \mathcal{R})$ consists of a species set $\mathcal{S} = \{ X_1, \ldots, X_n \}$ together with a reaction set $\mathcal{R} = \{ R_1, \ldots, R_r \}$. The elementary reactions have the form
\[R_j: \; \; \;  \alpha_{1j} X_1 + \alpha_{2j} X_2 + \cdots + \alpha_{nj} X_n \longrightarrow \beta_{1j} X_1 + \beta_{2j} X_2 + \cdots + \beta_{nj} X_n, \; \; \; j=1, \ldots, r\]
where the constants $\alpha_{ij}, \beta_{ij} \in \mathbb{Z}$ are called \textbf{stoichiometric constants}. We say that $X_i \in \mathcal{S}$ is a \textbf{reactant} of reaction $R_j$ if $\alpha_{ij} > 0$ and a \textbf{product} of reaction $R_j$ if $\beta_{ij} > 0$. Given $\alpha_j = ( \alpha_{1j}, \alpha_{2j}, \ldots, \alpha_{nj} ) \in \mathbb{Z}^n_{\geq 0}$ and $\beta_j = ( \beta_{1j}, \beta_{2j}, \ldots, \beta_{nj} ) \in \mathbb{Z}^n_{\geq 0}$, the vector $\beta_j - \alpha_j \in \mathbb{Z}^n$ is called the \textbf{reaction vector} of reaction $R_j$. We define the \textbf{stoichiometric matrix} $\Gamma \in \mathbb{Z}^{n \times r}$ to be the matrix with entries $\Gamma_{ij} = \beta_{ij}-\alpha_{ij}$ (i.e., columns given by the reaction vectors $\beta_j - \alpha_j$), and the \textbf{stoichiometric subspace} to be $S = \text{im}(\Gamma)$. The dimension of $S$ is denoted $s = \text{dim}(S)$.
\end{definition}

Given a chemical reaction network $(\mathcal{S},\mathcal{R})$, we associate the \emph{chemical reaction system}
\begin{equation}
\label{de}
\frac{d\mathbf{x}}{dt} = \Gamma R(\mathbf{x})
\end{equation}
where $\mathbf{x} = (x_1, \ldots, x_n) \in \mathbb{R}_{\geq 0}^n$ is the vector of species concentrations and $R(\mathbf{x}) = (R_1(\mathbf{x}), \ldots, R_r(\mathbf{x})) \in \mathbb{R}_{\geq 0}^r$ is the reaction rate vector. We assume that the reaction rates $R_j(\mathbf{x})$ satisfy the following regularity assumptions:
\begin{enumerate}
\item[\textbf{(A1)}] Reaction rates depend smoothly on the concentrations (i.e., $R_j(\mathbf{x}) \in \mathscr{C}^1$);
\item[\textbf{(A2)}] Reaction rates are nonnegative and strictly positive if and only if all reactants are present (i.e., $R_j(\mathbf{x}) \geq 0$ with strict positivity if and only if $x_j > 0$ for all $\alpha_{ij} > 0$); and
\item[\textbf{(A3)}] Reaction rates are nondecreasing with respect to their reactants (i.e., $\displaystyle{\frac{\partial R_j(\mathbf{x})}{\partial x_i} \geq 0}$ if $\alpha_{ij} > 0$).
\end{enumerate}

A classical kinetic assumption consistent with \textbf{(A1)}–\textbf{(A3)} is the \emph{law of mass-action}, which states that the rate of a reaction is proportional to the product of the reactant concentrations \cite{guldberg1879uber}. For example, a reaction of the form $X_1 + 2 X_2 \stackrel{k_j}{\longrightarrow} \cdots$ has the rate $k_jx_1x_2^2$. When the reaction rates are not mass-action, we indicate the associated reaction rate above the arrow, e.g., $X_1 \stackrel{R_j(\mathbf{x})}{\longrightarrow} X_2$. When the system \eqref{de} uses mass-action kinetics, we call it a \emph{mass-action system}.

To further understand the state space of \eqref{de}, we define the following.

\begin{definition}
Consider a chemical reaction network $(\mathcal{S},\mathcal{R})$ with corresponding chemical reaction system \eqref{de} and stoichiometric subspace $S = \text{im}(\Gamma)$. Then:
\begin{enumerate}
\item The \textbf{stoichiometric compatibility class} of \eqref{de} associated with an initial condition $\mathbf{x}_0 \in \mathbb{R}_{\geq 0}^n$ is given by $\mathsf{C}_{\mathbf{x}_0} = (S + \mathbf{x}_0) \cap \mathbb{R}_{\geq 0}^n$.
\item A \textbf{conservation vector} of \eqref{de} is a vector $\mathbf{w} \in \mathbb{R}^n$ satisfying $\mathbf{w} \in S^{\perp}$ (i.e., $\mathbf{w} \in \text{null}(\Gamma^T)$). The \textbf{conservation law} associated with the conservation vector $\mathbf{w} \in \mathbb{R}_{\geq 0}^n$ is given by $\mathbf{w}^T \mathbf{x} = \Lambda$, where $\Lambda$ is called the \textbf{conservation constant} of the conservation law. Given a linearly independent set of $n-s$ conservation vectors $\{ \mathbf{w}_1, \ldots, \mathbf{w}_{n-s} \} \subset \mathbb{R}^n$, we can define the \textbf{conservation matrix} $W \in \mathbb{R}^{n \times (n-s)}$ where $W_{i,\cdot} = \mathbf{w}_i$, and a \textbf{conservation constant vector} $\mathbf{\Lambda} = (\Lambda_1, \ldots, \Lambda_{n-s}) \in \mathbb{R}^{n-s}$ so that
\begin{equation}
\label{conservation}
W \mathbf{x} = \mathbf{\Lambda}.
\end{equation}
A network is said to be \textbf{conservative} if there is a conservation vector $\mathbf{w} \in S^{\perp} \cap \mathbb{R}_{> 0}^n$.
\end{enumerate}
\end{definition}

Several important restrictions on the dynamics of \eqref{de} are known to follow from the structure of the stoichiometric compatibility classes and conservation laws. By positivity of solutions and the property $d\mathbf{x}/dt \in \text{im}(\Gamma) = S$ for \eqref{de}, if $\mathbf{x}(0)=\mathbf{x}_0$, then $\mathbf{x}(t) \in \mathsf{C}_{\mathbf{x}_0}$ for all $t \geq 0$ \cite{horn1972general,feinberg1972chemical}. That is, solutions of \eqref{de} remain in the stoichiometric compatibility class associated with the initial condition. We also have that $\mathbf{w}^T \mathbf{x}(t) = \mathbf{w}^T \mathbf{x}_0 = \Lambda$ for all $t \geq 0$. Note that there is a one-to-one correspondence between the conservation constants $\mathbf{\Lambda}$ and the stoichiometric compatibility classes $\mathsf{C}_{\mathbf{x}_0}$; that is, $\mathsf{C}_{\mathbf{x}_0}$ can be uniquely determined from $\mathbf{\Lambda}$, and vice-versa. If the network is conservative, then $\mathsf{C}_{\mathbf{x}_0}$ is a compact set, so solutions of \eqref{de} are bounded.

\subsection{Siphons}

The structure of chemical reaction networks has played a significant role in understanding whether solutions of the chemical reaction system \eqref{de} can tend toward zero. In this paper, we utilize the following concept, introduced to chemical reaction network theory in \cite{angeli2007petri} and further developed in several works \cite{shiu2010siphons,angeli2009persistence,donnell2014control}.

\begin{definition}
\label{def:siphon}
Consider a chemical reaction network $(\mathcal{S},\mathcal{R})$. A subset $\mathcal{X} \subseteq \mathcal{S}$ is a \textbf{siphon} if every reaction $R_k \in \mathcal{R}$ that has a species $X_i \in \mathcal{X}$ as a product also has a species $X_j \in \mathcal{X}$ as a reactant. We define the \textbf{antisiphon} $\mathcal{Y} = \mathcal{S} \setminus \mathcal{X}$. A siphon $\mathcal{X}$ is \textbf{noncritical} if there exists a conservation vector $\mathbf{w} \in \mathbb{R}_{\geq 0}^m$ for which $w_i > 0$ implies $X_i \in \mathcal{X}$; otherwise, the siphon is \textbf{critical}.
\end{definition}

Siphons have been studied in chemical reaction network theory due to their connection with forward-invariant boundary faces of the positive orthant $\mathbb{R}_{\geq 0}^n$ and boundary steady states. We define the following.

\begin{definition}
\label{def:invariant}
Consider a chemical reaction network $(\mathcal{S},\mathcal{R})$ with corresponding chemical reaction system \eqref{de}. Suppose $\mathcal{X} \subseteq \mathcal{S}$ is a siphon as defined in Definition \ref{def:siphon}. Then the $\mathcal{X}$\textbf{-free boundary face} of the state space $\mathbb{R}^n_{\geq 0}$ is given by
\[
B_{\mathcal{X}} = \{ \mathbf{x} \in \mathbb{R}_{\geq 0}^n \mid x_i = 0 \text{ for } X_i \in \mathcal{X} \}.
\]
The set of $\mathcal{X}$-\textbf{free boundary steady states} is given by
\[
E_{\mathcal{X}} = \{ \mathbf{x}^* \in \mathbb{R}_{\geq 0}^n \mid x_i^* = 0 \text{ for } X_i \in \mathcal{X} \text{ and } \Gamma R(\mathbf{x}^*) = \mathbf{0} \} \subseteq B_{\mathcal{X}}.
\]
\end{definition}

It is known that $B_{\mathcal{X}}$ is an invariant region of the system \eqref{de} (i.e., $\mathbf{x}(0) \in B_{\mathcal{X}}$ implies $\mathbf{x}(t) \in B_{\mathcal{X}}$ for all $t \geq 0$). It is also known that, if $\mathcal{X}$ is a noncritical siphon, then \eqref{de} cannot approach $B_{\mathcal{X}}$, a property called \emph{structural persistence} \cite{angeli2007petri}.

Computational methods for finding siphons in Petri nets require specialized packages and can be computationally intensive \cite{liu2016survey,donnell2014control}. Consequently, for many networks, it is often more practical to compute the zeros of the corresponding chemical reaction system \eqref{de}. Packages for computing the zeros of algebraic systems of equations with both symbolic and numeric parameters are broadly available.

\section{Results}
\label{sec:results}

We now extend the next generation matrix method for computing the basic reproduction number of an infectious disease to computing the boundary reproduction number in a chemical reaction network. Our methodology and notation closely follow that of \cite{VANDENDRIESSCHE2002} and formalize the work of \cite{Avram2025StabilityIR} in the language of chemical reaction networks.

\subsection{Boundary Reproduction Number}

Consider a chemical reaction network $(\mathcal{S},\mathcal{R})$ which has a critical siphon $\mathcal{X} \subset \mathcal{S}$. We reorder the species set so that the first $m$ species are elements of $\mathcal{X}$ (i.e. $X_i \in \mathcal{X}$ for $i=1, \ldots, m$) and the remaining $n-m$ are elements of $\mathcal{Y}$ (i.e. $Y_{i}=X_{i+m} \in \mathcal{Y}$ for $i = 1, \ldots, n-m$). Let $\tilde{\mathbf{x}} \in \mathbb{R}^m_{\geq 0}$ denote the concentration vector for species in the siphon $\mathcal{X}$ and $\tilde{\mathbf{y}} \in \mathbb{R}^{n-m}_{\geq 0}$ denote the concentration vector for species in the antisiphon $\mathcal{Y}$, so that the state vector can be decomposed as $\mathbf{x} = (\tilde{\mathbf{x}},\tilde{\mathbf{y}}) \in \mathbb{R}_{\geq 0}^n$. We rewrite the system of differential equations \eqref{de} as
\begin{equation}
    \label{de2}
    \left\{ \; \; \;
    \begin{aligned}
    \frac{d\tilde{\mathbf{x}}}{dt} & = \mathbf{f}(\tilde{\mathbf{x}},\tilde{\mathbf{y}}) \\
    \frac{d\tilde{\mathbf{y}}}{dt} & = \mathbf{g}(\tilde{\mathbf{x}},\tilde{\mathbf{y}}).
    \end{aligned}
    \right.
\end{equation}
The decomposition \eqref{de2} is analogous to the decomposition in \cite{VANDENDRIESSCHE2002}, where the species in $\tilde{\mathbf{x}}$ are interpreted as infected and the species in $\tilde{\mathbf{y}}$ are noninfected. Note that, since $\mathcal{X}$ is a siphon, we have that $\mathbf{f}(\mathbf{0},\tilde{\mathbf{y}}) = \mathbf{0}$ for every $(\mathbf{0},\tilde{\mathbf{y}}) \in B_{\mathcal{X}}$, so that $B_{\mathcal{X}}$ is forward invariant. This is analogous to the epidemiological property that no new infections are generated without current infections.

The decomposition $\mathbf{x} = (\tilde{\mathbf{x}},\tilde{\mathbf{y}}) \in \mathbb{R}_{\geq 0}^n$ allows us to rewrite the conservation equation \eqref{conservation} as
\begin{equation}
\label{conservation2}
W \mathbf{x} = [W_{\mathcal{X}} \; W_{\mathcal{Y}}] \left[ \begin{array}{c} \tilde{\mathbf{x}} \\ \tilde{\mathbf{y}} \end{array} \right] = W_{\mathcal{X}} \tilde{\mathbf{x}} + W_{\mathcal{Y}} \tilde{\mathbf{y}} = \mathbf{\Lambda}.
\end{equation}
Notably, at any $\mathcal{X}$-free boundary steady state $\mathbf{x}^* = (\mathbf{0},\tilde{\mathbf{y}}^*) \in E_{\mathcal{X}}$, this reduces to $W_{\mathcal{Y}} \tilde{\mathbf{y}}^* = \mathbf{\Lambda}$, so that, if $W_{\mathcal{Y}}$ is invertible, we have the unique solution $\tilde{\mathbf{y}}^* = W_{\mathcal{Y}}^{-1} \mathbf{\Lambda}$. Also note that the values of $\mathbf{\Lambda}$ uniquely determine the stoichiometric compatibility class $\mathsf{C}_{\mathbf{x}_0}$.

We now decompose the vector $\mathbf{f}(\tilde{\mathbf{x}},\tilde{\mathbf{y}})$ as in \cite{VANDENDRIESSCHE2002,van2008further}. Let $\mathcal{F}(\tilde{\mathbf{x}},\tilde{\mathbf{y}}) \in \mathbb{R}^{m}_{\geq 0}$ be a vector that contains terms in $\mathbf{f}(\tilde{\mathbf{x}},\tilde{\mathbf{y}})$ which are positive. These terms correspond to terms which ``produce'' species in $\mathcal{X}$, analogously with producing new infections in disease-spread models. We define $-\mathcal{V}(\tilde{\mathbf{x}},\tilde{\mathbf{y}})$ to contain the remaining terms in $\mathbf{f}(\tilde{\mathbf{x}},\tilde{\mathbf{y}})$. Note that $\mathcal{F}(\tilde{\mathbf{x}},\tilde{\mathbf{y}})$ need not contain all of the positive terms in $\mathbf{f}(\tilde{\mathbf{x}},\tilde{\mathbf{y}})$, and the terms in $\mathcal{V}(\tilde{\mathbf{x}},\tilde{\mathbf{y}})$ may be positive or negative. This decomposition allows the first equation in \eqref{de2} to be rewritten as
\begin{equation}
    \label{de3}
    \frac{d\tilde{\mathbf{x}}}{dt} = \mathcal{F}(\tilde{\mathbf{x}},\tilde{\mathbf{y}}) - \mathcal{V}(\tilde{\mathbf{x}},\tilde{\mathbf{y}}).
\end{equation}

Given a boundary steady state $\mathbf{x}^* = (\mathbf{0},\tilde{\mathbf{y}}^*) \in E_{\mathcal{X}}$, we define the Jacobians
\begin{equation}
    \label{FV}
F = \frac{\partial \mathcal{F}}{\partial \tilde{\mathbf{x}}}(\mathbf{0},\tilde{\mathbf{y}}^*) \; \mbox{ and } \; V = \frac{\partial \mathcal{V}}{\partial \tilde{\mathbf{x}}} (\mathbf{0},\tilde{\mathbf{y}}^*) .
\end{equation}
We note that the choice of vectors $\mathcal{F}$ and $\mathcal{V}$ for splitting $\mathbf{f}(\tilde{\mathbf{x}},\tilde{\mathbf{y}})$ in \eqref{de3}, and consequently the matrices $F$ and $V$ for splitting the Jacobian of $\mathbf{f}(\tilde{\mathbf{x}},\tilde{\mathbf{y}})$ in \eqref{FV}, are not necessarily unique (see \cite{VANDENDRIESSCHE2002}). Heuristics for choosing $\mathcal{F}$ and $\mathcal{V}$ are provided in Section \ref{sec:heuristic}.

We introduce the following matrix classifications.

\begin{definition}
\label{matrices1}
Consider a matrix $A \in \mathbb{R}^{n \times n}$. Then:
\begin{enumerate}
\item $A$ is a \textbf{nonnegative matrix}, denoted $A \geq 0$, if $a_{ij} \geq 0$ for all $i, j =1, \ldots, n$ and $A \not= 0$.
\item $A$ is a \textbf{$Z$-matrix} if $a_{ij} \leq 0$ for all $i \not= j$.
\end{enumerate}
\end{definition}
\noindent We also refer to nonnegative vectors $\mathbf{v} \geq 0$ as those which satisfy $v_i \geq 0$ for all $i = 1, \ldots, n$ and $\mathbf{v} \not= \mathbf{0}.$

We now introduce the following concept, which is the primary object of the paper and extends the basic reproduction number from mathematical epidemiology to mathematical biochemistry \cite{Avram2025StabilityIR}.

\begin{definition}
\label{def:bpn}
Let $\mathcal{X} \subseteq \mathcal{S}$ be a critical siphon of a chemical reaction network $(\mathcal{S},\mathcal{R})$ with corresponding chemical reaction system \eqref{de}. Suppose $\mathbf{x}^* \in E_{\mathcal{X}}$ is an $\mathcal{X}$-free boundary steady state of \eqref{de} and let $\mathbf{f}(\tilde{\mathbf{x}},\tilde{\mathbf{y}})$ be split for some choice of $\mathcal{F}$ and $\mathcal{V}$ as in \eqref{de3} and $F$ and $V$ be defined as in \eqref{FV}. Suppose that $F$ is nonnegative and $V$ is a $Z$-matrix with a nonnegative inverse (i.e. $F \geq 0$ and $V^{-1} \geq 0$). Then the \textbf{boundary reproduction number of the steady state} $\mathbf{x}^*$ for the splitting $\mathcal{F}$ and $\mathcal{V}$ is $\mathscr{R}_{\mathbf{x}^*} = \rho(FV^{-1})$.
\end{definition}

\noindent We interpret the boundary reproduction number analogously to the basic reproduction number $\mathscr{R}_0$ in mathematical epidemiology. Specifically, a boundary reproduction number less than one implies that a small infusion of species in $\mathcal{X}$ will tend to diminish in the system, while a boundary reproduction number greater than one implies that a small infusion of species in $\mathcal{X}$ will grow. 


The decomposition of a matrix $A = M - N$ where $M \geq 0$ and $N^{-1} \geq 0$ is not unique to the next generation method in mathematical epidemiology \cite{VANDENDRIESSCHE2002,van2008further}. This technique of splitting has also been used in iterative methods (e.g., see \cite{varga1962iterative,rose1984convergent,berman1994nonnegative}) and Markov chain theory (e.g., see \cite{seneta2006non}), among other areas.

\subsection{Selection of $\mathcal{F}$ and $\mathcal{V}$}
\label{sec:heuristic}

In general, there are multiple options for sets of positive terms from \eqref{de2} to include in $\mathcal{F}$, and not all of them lead to a $V$ with a nonnegative inverse ($V^{-1} \geq 0$). Selecting the maximal set of positive terms for $\mathcal{F}$ was the heuristic adopted in \cite{avram2023algorithmic}. This approach increases the likelihood that $V^{-1} \geq 0$; however, it can also increase the computational complexity of computing the reproduction number. Ideally, one seeks to determine the fewest positive terms to include in $\mathcal{F}$ while satisfying the requirements of Definition \ref{def:bpn}, so that the boundary reproduction number is valid and can be efficiently computed.

In this paper, we adopt the following heuristic for selecting the positive terms in \eqref{de2} to include in $\mathcal{F}$. The heuristic is based on the structure of interactions between siphon elements in the underlying interaction network and is justified in Sections \ref{sec:mainresult}, \ref{sec:matrixresult}, and \ref{sec:networkresult} (in particular, Theorem \ref{thm:3}). We also show in our examples how this heuristic applies to models of infectious disease spread.
\begin{enumerate}
\item[\textbf{(H1)}] Identify and include in $\mathcal{F}$ any positive term corresponding to an autocatalytic reaction on a given species. A reaction $R_j$ is considered autocatalytic in $X_i$ if $\alpha_{ij} > 0$, $\beta_{ij} > 0$, and $\beta_{ij} - \alpha_{ij} > 0$ (i.e., $X_i$ is a reactant and product in the same reaction and increases as a result of the reaction).
\item[\textbf{(H2)}] Identify and potentially include in $\mathcal{F}$ positive terms on the right-hand side of \eqref{de2} that result from reactions disassociative in $\mathcal{X}$. A reaction is disassociative in $\mathcal{X}$ if there are multiple species from the siphon in the product (i.e., $\beta_{ik} > 0$ and $\beta_{jk} > 0$ for $X_i, X_j \in \mathcal{X}$).
\item[\textbf{(H3)}] Identify and potentially include in $\mathcal{F}$ any remaining positive terms on the right-hand side of \eqref{de2}.
\end{enumerate}

We construct $\mathcal{F}$ by starting with $\mathcal{F} = \mathbf{0}$, iteratively adding terms to $\mathcal{F}$, and checking whether $V^{-1} \geq 0$ is attained. We add terms to $\mathcal{F}$ in the order of heuristic \textbf{(H1)} (include all terms), then \textbf{(H2)} (include terms as necessary), then \textbf{(H3)} (include terms as necessary). Theorem \ref{thm:3} provides further insight into how to efficiently select these terms by giving sufficient conditions under which a choice of terms succeeds.

\begin{example}
\label{ex:EnvZOmpR1}
Reconsider the network in \eqref{EnvZOmpr-simplified} from Section \ref{sec:introduction}. We have that $\mathcal{X} = \{ X, X_t, Y, X_pY, X_tY_p \}$ is a siphon with corresponding antisiphon $\mathcal{Y} = \{ X_p, Y_p \}$. For clarity, we relabel the species set $\mathcal{S}$ according to $X_1 = X, X_2 = X_t, X_3 = Y, X_4 = X_pY, X_5 = X_tY_p, Y_1 = X_p$, and $Y_2 = Y_p$, which gives:
\begin{equation}
\label{EnvZOmpr-simplified2}
\begin{tikzcd}
X_1 \arrow[rr,"k_1"] & & X_2 \arrow[rr,"k_2"] & & Y_1 \\[-0.1in]
X_3 + Y_1 \arrow[rr,"k_3"] & & X_4 \arrow[rr,"k_4"] & & X_1 + Y_2 \\[-0.1in]
X_2 + Y_2 \arrow[rr,"k_5"] & & X_5 \arrow[rr,"k_6"] & & X_2 + X_3
\end{tikzcd}
\end{equation}

The system \eqref{EnvZOmpr-simplified2} has the conservation laws $X_1+X_2+X_4+X_5+Y_1 = X_{tot}$ and $X_3+X_4+X_5+Y_2 = Y_{tot}$ with conservation constants $X_{tot} > 0$ and $Y_{tot} > 0$, so that $\mathcal{X}$ is critical. The associated mass-action system, partitioned as in \eqref{de2}, is:
\begin{equation}
\label{EnvZOmpr-simplified2-de}
\left\{ \; \; \begin{aligned}
x_1' & = -k_1x_1+k_4x_4 \\
x_2' & = k_1x_1-k_2x_2 - k_5 x_2 y_2 + k_6 x_5 \\
x_3' & = -k_3 x_3 y_1+k_6x_5 \\
x_4' & = k_3 x_3 y_1 - k_4 x_4 \\
x_5' & = k_5 x_2 y_2 - k_6 x_5
\end{aligned}\right. \; \; 
\left\{ \; \; \begin{aligned}
y_1' & = k_2x_2-k_3x_3 y_1 \\
y_2' & = k_4x_4-k_5x_2 y_2
\end{aligned}
\right.
\end{equation}
The conservation equation \eqref{conservation2} corresponding to \eqref{EnvZOmpr-simplified2} has
\[W_{\mathcal{X}} = \begin{bmatrix} 1 & 1 & 0 & 1 & 1 \\ 0 & 0 & 1 & 1 & 1 \end{bmatrix}, \quad W_{\mathcal{Y}} = \begin{bmatrix} 1 & 0 \\ 0 & 1 \end{bmatrix}, \quad \text{and} \quad \mathbf{\Lambda} = \begin{bmatrix} X_{tot} \\ Y_{tot} \end{bmatrix}.\]
We have that $|\mathcal{Y}| = n-s = 2$ and $W_{\mathcal{Y}}$ is invertible. Consequently, every $\mathcal{X}$-free boundary steady state has the form $\mathbf{x}^* = W_{\mathcal{Y}}^{-1}\mathbf{\Lambda} = (0,0,0,0,0,X_{tot},Y_{tot})$. Since the steady state $\mathbf{x}^*$ depends on the conservation constants $X_{tot}$ and $Y_{tot}$, $\mathscr{R}_{\mathbf{x}^*} = \rho(FV^{-1})$ may also depend on these constants.

We now consider which terms to include in the vector $\mathcal{F}$. The network \eqref{EnvZOmpr-simplified2} does not contain any autocatalytic reactions, so heuristic \textbf{(H1)} does not apply; however, it does contain the reaction $X_5 \stackrel{k_6}{\longrightarrow} X_2 + X_3$, which is disassociative in $\mathcal{X}$. Heuristic \textbf{(H2)} suggests choosing one of the $k_6 x_5$ terms from either the $x_2'$ or $x_3'$ equation in \eqref{EnvZOmpr-simplified2-de}. Choosing the $k_6 x_5$ term in the $x_2'$ equation gives:
\begin{equation}
\label{FV5}
\begin{aligned}
\mathcal{F} & = \langle 0,k_6 x_5,0,0,0 \rangle,\\
\mathcal{V} & = \langle k_1x_1-k_4x_4,-k_1x_1+k_2x_2+k_5 x_2 y_2,k_3 x_3 y_1-k_4x_5,-k_3 x_3 y_1+k_4 x_4,-k_5 x_2 y_2+k_6 x_5 \rangle.
\end{aligned}
\end{equation}

Evaluating the Jacobians of $\mathcal{F}$ and $\mathcal{V}$ with respect to the siphon elements $\mathcal{X} = \{ X_1, X_2, X_3, X_4, X_5 \}$ at $\mathbf{x}^* \in E_{\mathcal{X}}$ gives
\begin{equation}
\label{FV2}
F = \left[ \begin{array}{ccccc}
0&0&0&0&0 \\
0&0&0&0&k_6 \\
0&0&0&0&0 \\
0&0&0&0&0 \\
0&0&0&0&0
\end{array}\right]
\quad \text{and} \quad
V = \left[ \begin{array}{ccccc}
k_1 & 0 & 0 & -k_4 & 0 \\
-k_1 & k_5 Y_{tot} + k_2 & 0 & 0 & 0 \\
0 & 0 & k_3 X_{tot} & 0 & -k_6 \\
0 & 0 & -k_3 X_{tot} & k_4 & 0 \\
0 & -k_5 Y_{tot} & 0 & 0 & k_6
\end{array}
 \right]
\end{equation}
We have that $F \geq 0$ and $V$ is a $Z$-matrix. 
We compute
\begin{equation}
    \label{Vinv}
\displaystyle{V^{-1} = \left[ 
\begin{array}{ccccc} 
\displaystyle \frac{k_2 + k_5 Y_{tot}}{k_1 k_2} & \displaystyle \frac{k_5 Y_{tot}}{k_1 k_2} & \displaystyle \frac{k_2 + k_5 Y_{tot}}{k_1 k_2} & \displaystyle \frac{k_2 + k_5 Y_{tot}}{k_1 k_2} & \displaystyle \frac{k_2 + k_5 Y_{tot}}{k_1 k_2} \\[0.1in]
\displaystyle \frac{1}{k_2} & \displaystyle \frac{1}{k_2} & \displaystyle \frac{1}{k_2} & \displaystyle \frac{1}{k_2} & \displaystyle \frac{1}{k_2} \\[0.1in]
\displaystyle \frac{k_5 Y_{tot}}{k_2 k_3 X_{tot}} & \displaystyle \frac{k_5 Y_{tot}}{k_2 k_3 X_{tot}} & \displaystyle \frac{k_2 + k_5 Y_{tot}}{k_2 k_3 X_{tot}} & \displaystyle \frac{k_5 Y_{tot}}{k_2 k_3 X_{tot}} & \displaystyle \frac{k_2 + k_5 Y_{tot}}{k_2 k_3 X_{tot}} \\[0.1in]
\displaystyle \frac{k_5 Y_{tot}}{k_2 k_4} & \displaystyle \frac{k_5 Y_{tot}}{k_2 k_4} & \displaystyle \frac{k_2 + k_5 Y_{tot}}{k_2 k_4} & \displaystyle \frac{k_2 + k_5 Y_{tot}}{k_2 k_4} & \displaystyle \frac{k_2 + k_5 Y_{tot}}{k_2 k_4} \\[0.1in]
\displaystyle \frac{k_5 Y_{tot}}{k_2 k_6} & \displaystyle \frac{k_5 Y_{tot}}{k_2 k_6} & \displaystyle \frac{k_5 Y_{tot}}{k_2 k_6} & \displaystyle \frac{k_5 Y_{tot}}{k_2 k_6} & \displaystyle \frac{k_2 + k_5 Y_{tot}}{k_2 k_6}
\end{array}
\right]}
\end{equation}
so that $V^{-1} \geq 0$ for all rate parameter and conservation values. We therefore compute
\[FV^{-1} = \left[ \begin{array}{ccccc}
\displaystyle 0 & \displaystyle 0 & \displaystyle 0 & \displaystyle 0 & \displaystyle 0 \\[0.1in]
\displaystyle \frac{k_5 Y_{\text{tot}}}{k_2} & \displaystyle \frac{k_5 Y_{\text{tot}}}{k_2} & \displaystyle \frac{k_5 Y_{\text{tot}}}{k_2} & \displaystyle \frac{k_5 Y_{\text{tot}}}{k_2} & \displaystyle \frac{k_5 Y_{\text{tot}} + k_2}{k_2} \\[0.1in]
\displaystyle 0 & \displaystyle 0 & \displaystyle 0 & \displaystyle 0 & \displaystyle 0 \\[0.1in]
\displaystyle 0 & \displaystyle 0 & \displaystyle 0 & \displaystyle 0 & \displaystyle 0 \\[0.1in]
\displaystyle 0 & \displaystyle 0 & \displaystyle 0 & \displaystyle 0 & \displaystyle 0
\end{array}
 \right].\]
The boundary reproduction number of $\mathbf{x}^*$ for this choice of splitting $\mathcal{F}$ and $\mathcal{V}$ is therefore given by
\begin{equation}
    \label{bpn1}
\mathscr{R}_{\mathbf{x}^*} = \rho(FV^{-1}) = \displaystyle{\frac{k_5Y_{tot}}{k_2}}.
\end{equation}
\end{example}

\subsection{Main Result}
\label{sec:mainresult}

The following is the first main result of this paper. It is modified from Theorem 2 of \cite{VANDENDRIESSCHE2002} and Theorem 1 of \cite{van2008further} for chemical reaction networks rather than models of infectious disease spread. The proof is given in Appendix \ref{app:a} and is modeled after \cite{van2008further}.

\begin{theorem}
\label{thm:main}
Let $\mathcal{X} \subseteq \mathcal{S}$ be a critical siphon of a chemical reaction network $(\mathcal{S},\mathcal{R})$ with corresponding chemical reaction system \eqref{de}. Suppose $\mathbf{x}^* \in E_{\mathcal{X}}$ is an $\mathcal{X}$-free boundary steady state of \eqref{de}, and let $\mathbf{f}(\tilde{\mathbf{x}},\tilde{\mathbf{y}})$ be split for some choice of $\mathcal{F}$ and $\mathcal{V}$ as in \eqref{de3}, and $F$ and $V$ be defined as in \eqref{FV}.
\begin{enumerate}
\item[(a)] Suppose the network has no conservation laws and the eigenvalues of the Jacobian of $\mathbf{g}(\tilde{\mathbf{x}},\tilde{\mathbf{y}})$ with respect to $\tilde{\mathbf{y}}$ evaluated at $\mathbf{x}^*$ have negative real parts. Then $\mathbf{x}^*$ is locally asymptotically stable if $\mathscr{R}_{\mathbf{x}^*} < 1$ and unstable if $\mathscr{R}_{\mathbf{x}^*} > 1$.
\item[(b)] Suppose the number of elements in the antisiphon $\mathcal{Y}$ equals the number of linearly independent conservation laws (i.e., $|\mathcal{Y}| = n - s$) and that $W_{\mathcal{Y}}$ is invertible. Then the steady state $\mathbf{x}^* = (0, W_{\mathcal{Y}}^{-1}\mathbf{\Lambda})$ is the unique steady state within its stoichiometric compatibility class, and it is locally asymptotically stable with respect to $\mathsf{C}_{\mathbf{x}_0}$ if $\mathscr{R}_{\mathbf{x}^*} < 1$ and unstable with respect to $\mathsf{C}_{\mathbf{x}_0}$ if $\mathscr{R}_{\mathbf{x}^*} > 1$.
\end{enumerate}
\end{theorem}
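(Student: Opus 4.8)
The plan is to follow the structure of the next-generation argument in \cite{van2008further,VANDENDRIESSCHE2002}, with the genuinely new step being the passage to a single stoichiometric compatibility class in part (b). I would begin with the observation common to both parts: since $\mathcal{X}$ is a siphon, $\mathbf{f}(\mathbf{0},\tilde{\mathbf{y}})\equiv\mathbf{0}$, so at $\mathbf{x}^*=(\mathbf{0},\tilde{\mathbf{y}}^*)$ one has $\partial\mathbf{f}/\partial\tilde{\mathbf{y}}(\mathbf{x}^*)=0$ while $\partial\mathbf{f}/\partial\tilde{\mathbf{x}}(\mathbf{x}^*)=F-V$ by \eqref{de3}--\eqref{FV}; hence the Jacobian $J$ of the right-hand side of \eqref{de2} at $\mathbf{x}^*$ is block lower triangular with diagonal blocks $F-V$ and $J_{\mathbf g}:=\partial\mathbf{g}/\partial\tilde{\mathbf{y}}(\mathbf{x}^*)$, so $\mathrm{spec}(J)=\mathrm{spec}(F-V)\cup\mathrm{spec}(J_{\mathbf g})$. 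I would then record the linear-algebraic core once: because $F\ge 0$ and $V$ is a nonsingular $Z$-matrix with $V^{-1}\ge0$ (i.e.\ a nonsingular M-matrix), classical M-matrix theory (Lemma~1 of \cite{VANDENDRIESSCHE2002}; see also \cite{berman1994nonnegative}) gives that the maximal real part $s(F-V)$ of the spectrum of $F-V$ satisfies $s(F-V)<0\iff\rho(FV^{-1})<1$ and $s(F-V)>0\iff\rho(FV^{-1})>1$. I would also note that $F-V$ is a Metzler matrix, since its off-diagonal entries $(F-V)_{ij}=f_{ij}-v_{ij}\ge0$, so by Perron--Frobenius $s(F-V)$ is itself an eigenvalue of $F-V$ admitting a nonnegative eigenvector; this is what makes the ``unstable'' direction compatible with the positive orthant.

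For part (a) there are no conservation laws, so $\mathsf{C}_{\mathbf{x}_0}=\mathbb{R}^n_{\ge0}$ and local stability of $\mathbf{x}^*$ is governed by $s(J)$ directly. The block-triangular form gives $s(J)=\max\{s(F-V),\,s(J_{\mathbf g})\}$; since $s(J_{\mathbf g})<0$ by hypothesis, the M-matrix equivalence above yields $s(J)<0$ exactly when $\mathscr{R}_{\mathbf{x}^*}<1$ (local asymptotic stability by the standard linearization theorem, using that $B_{\mathcal X}$ and $\mathbb{R}^n_{\ge0}$ are forward invariant), and when $\mathscr{R}_{\mathbf{x}^*}>1$ one has $s(J)=s(F-V)>0$ with a nonnegative $\tilde{\mathbf{x}}$-eigendirection, so $\mathbf{x}^*$ is unstable.

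For part (b) I would first settle uniqueness: at any $\mathbf{x}^*=(\mathbf{0},\tilde{\mathbf{y}}^*)\in E_{\mathcal X}$ the conservation relation \eqref{conservation2} reduces to $W_{\mathcal Y}\tilde{\mathbf{y}}^*=\mathbf{\Lambda}$, and invertibility of $W_{\mathcal Y}$ forces $\tilde{\mathbf{y}}^*=W_{\mathcal Y}^{-1}\mathbf{\Lambda}$; since $\mathbf{\Lambda}$ and $\mathsf{C}_{\mathbf{x}_0}$ are in one-to-one correspondence, $\mathbf{x}^*=(\mathbf{0},W_{\mathcal Y}^{-1}\mathbf{\Lambda})$ is the unique $\mathcal X$-free boundary steady state in its class. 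For stability, since $|\mathcal Y|=n-s$ and $W_{\mathcal Y}$ is invertible I would use \eqref{conservation2} to solve $\tilde{\mathbf{y}}=W_{\mathcal Y}^{-1}(\mathbf{\Lambda}-W_{\mathcal X}\tilde{\mathbf{x}})$ globally on $\mathsf{C}_{\mathbf{x}_0}$, so that the restriction of \eqref{de2} to the class is the reduced system $\dot{\tilde{\mathbf{x}}}=\mathbf{h}(\tilde{\mathbf{x}}):=\mathbf{f}\!\big(\tilde{\mathbf{x}},\,W_{\mathcal Y}^{-1}(\mathbf{\Lambda}-W_{\mathcal X}\tilde{\mathbf{x}})\big)$ with equilibrium $\tilde{\mathbf{x}}=\mathbf{0}$. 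The key cancellation is that the chain-rule correction vanishes: $D\mathbf{h}(\mathbf{0})=\partial\mathbf{f}/\partial\tilde{\mathbf{x}}(\mathbf{x}^*)+\partial\mathbf{f}/\partial\tilde{\mathbf{y}}(\mathbf{x}^*)\cdot(-W_{\mathcal Y}^{-1}W_{\mathcal X})=(F-V)+0=F-V$, because $\partial\mathbf{f}/\partial\tilde{\mathbf{y}}(\mathbf{x}^*)=0$. Feeding $D\mathbf{h}(\mathbf{0})=F-V$ into the M-matrix equivalence gives local asymptotic stability of $\mathbf{x}^*$ within $\mathsf{C}_{\mathbf{x}_0}$ when $\mathscr{R}_{\mathbf{x}^*}<1$; when $\mathscr{R}_{\mathbf{x}^*}>1$ I would perturb $\mathbf{x}^*$ along the nonnegative Perron eigenvector of the Metzler matrix $F-V$, which for small perturbation size stays in $\mathsf{C}_{\mathbf{x}_0}\cap\mathbb{R}^n_{\ge0}$ and grows under the reduced flow, giving instability within the class.

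I expect the main obstacle to be the bookkeeping in part (b): setting up the conservation-law chart on $\mathsf{C}_{\mathbf{x}_0}$, checking that the reduced linearization is exactly $F-V$ so that the same $\mathscr{R}_{\mathbf{x}^*}$ governs the stability dichotomy, and dealing with the fact that $\mathbf{x}^*$ sits at a corner of the reduced state region rather than in its interior --- which is precisely where the Metzler/Perron--Frobenius structure of $F-V$ is needed to produce an admissible unstable perturbation. By contrast the spectral-abscissa lemma relating $s(F-V)$ to $\rho(FV^{-1})$ is classical and can simply be cited, and the remaining analytic steps (linearized stability implying nonlinear local asymptotic stability, and a positive-real-part eigenvalue with nonnegative eigenvector implying instability in the forward-invariant region) are routine once the reduction is in hand.
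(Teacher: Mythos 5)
Your proposal is correct and follows essentially the same route as the paper's proof: the same block lower-triangular Jacobian obtained from the siphon property $\partial\mathbf{f}/\partial\tilde{\mathbf{y}}(\mathbf{x}^*)=0$, the same $M$-matrix equivalence between $s(F-V)<0$ and $\rho(FV^{-1})<1$, and for part (b) the same reduction $\tilde{\mathbf{y}}=W_{\mathcal Y}^{-1}(\mathbf{\Lambda}-W_{\mathcal X}\tilde{\mathbf{x}})$ with the chain-rule term vanishing so that the reduced linearization is exactly $F-V$. The only differences are that the paper derives the spectral equivalence explicitly (via the identities $(I-FV^{-1})^{-1}=I+F(V-F)^{-1}$ and a continuity argument for the threshold case) where you cite it, while your Perron--Frobenius observation that the Metzler matrix $F-V$ admits a nonnegative unstable eigendirection compatible with the corner of $\mathsf{C}_{\mathbf{x}_0}\cap\mathbb{R}^n_{\ge0}$ is a worthwhile point that the paper's proof leaves implicit.
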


\noindent Theorem \ref{thm:main} includes, but is more general than, Theorem 1 of \cite{van2008further}. In particular, conditions (A1)–(A3) of Theorem 1 of \cite{van2008further} are satisfied by kinetic condition \textbf{(A2)} and the definition of the siphon $\mathcal{X}$ (Definition \ref{def:siphon}), but condition (A4) is violated by disassociation reactions, i.e., reactions like $X_1 \to X_2 + X_3$ where $X_1, X_2, X_3 \in \mathcal{X}$. It is also worth noting that conservation laws, which are common in mathematical biochemistry, were not considered in \cite{VANDENDRIESSCHE2002} or \cite{van2008further}. When the number of elements of $\mathcal{Y}$ and the number of linearly independent conservation laws coincide, the boundary steady state in each stoichiometric compatibility class depends uniquely on the conservation constants. In general, we may have $\mathscr{R}_{\mathbf{x}^*} > 1$ for some stoichiometric compatibility classes but $\mathscr{R}_{\mathbf{x}^*} < 1$ for others.

\begin{example}
\label{ex:EnvZOmpR2}
Reconsider the network in \eqref{EnvZOmpr-simplified} from Section \ref{sec:introduction} and analyzed in Example \ref{ex:EnvZOmpR1}. We have $|\mathcal{Y}| = n - 2 = 2$ and $W_{\mathcal{Y}}$ is invertible. Furthermore, for every $\mathbf{x}^* \in E_\mathcal{X}$, we have $F \geq 0$, $V$ is a $Z$-matrix, and $V^{-1} \geq 0$ \eqref{Vinv}, so we may define $\mathscr{R}_{\mathbf{x}^*}$ as in \eqref{bpn1} by Definition \ref{def:bpn}. 

By Theorem \ref{thm:main}, the boundary steady state $\mathbf{x}^* \in E_{\mathcal{X}}$ is locally asymptotically stable with respect to its stoichiometric compatibility class if
\[
\mathscr{R}_{\mathbf{x}^*} = \rho(FV^{-1}) = \frac{k_5Y_{tot}}{k_2} < 1 \Longrightarrow Y_{tot} < \frac{k_2}{k_5}
\]
and unstable with respect to its stoichiometric compatibility class if
\[
\mathscr{R}_{\mathbf{x}^*} = \rho(FV^{-1}) = \frac{k_5Y_{tot}}{k_2} > 1 \Longrightarrow Y_{tot} > \frac{k_2}{k_5}.
\]

We highlight that establishing the stability threshold for the boundary steady state is challenging by direct methods. One alternative to applying Theorem \ref{thm:main} is to compute the eigenvalues of the Jacobian of \eqref{EnvZOmpr-simplified2-de} evaluated at the boundary steady state, and then establish conditions under which their real parts are negative. The Jacobian of the system restricted to the siphon $\mathcal{X}$ is given by
\begin{equation}
\label{jacobian}
J(\mathbf{x}^*) = \left[ \begin{array}{ccccc} 
-k_1 & 0 & 0 & k_4 & 0 \\
k_1 & - k_5 Y_{tot}-k_2  & 0 & 0 & k_6 \\
0 & 0 & -k_3 X_{tot} & 0 & k_6 \\
0 & 0 & k_3 X_{tot} & -k_4 & 0 \\
0 & k_5 Y_{tot} & 0 & 0 & -k_6 
\end{array}
\right].
\end{equation}
This matrix is irreducible and computing the eigenvalues of \eqref{jacobian} requires solving a non-factorable fifth-order polynomial over eight symbolic parameters. This can only be done in exceptional cases, which are not expected in general. Alternatively, one may compute the Routh–Hurwitz table to determine conditions under which the real parts of the eigenvalues are negative. For the matrix \eqref{jacobian}, however, the Routh–Hurwitz table would require over a page to display, even after simplification on a computer, and is consequently too lengthy to be of practical use. The boundary reproduction number presented in Theorem \ref{thm:main} thus represents a significant improvement in computational efficiency.
\end{example}

\subsection{Matrix Structure Result}
\label{sec:matrixresult}

In this section, we further consider the structure of the matrix $V$. We introduce the following matrix structures.

\begin{definition}
\label{matrices}
A matrix $A \in \mathbb{R}^{n \times n}$ is \textbf{block triangularizable} if there exists a permutation matrix $P$ such that
\begin{equation}
\label{blocktriangular}
PAP^T = \left[ \begin{array}{cccc}
A_1 & J_{12} & \cdots & J_{1n} \\
0 & A_2 & \cdots & J_{2n} \\
\vdots & \vdots & \ddots & \vdots \\
0 & 0 & \cdots & A_n
\end{array}\right]
\end{equation}
where $A_i$ are irreducible square matrices, $J_{ij}$ are rectangular matrices of appropriate size, and $0$ are rectangular matrices of all zeroes.
\end{definition}

We now present a result on how the matrix structures of $F$ and $V$ can guarantee that the conditions of Theorem \ref{thm:main} hold. The proof is given in Appendix \ref{app:b}.
\begin{theorem}
\label{thm:2}
Let $\mathcal{X} \subseteq \mathcal{S}$ be a critical siphon of a chemical reaction network $(\mathcal{S},\mathcal{R})$ with corresponding chemical reaction system \eqref{de}. Define $F$ and $V$ as in \eqref{FV}, and suppose that $F \geq 0$ and that the matrix $V$ is block triangularizable with block triangular form \eqref{blocktriangular}, where $J_{ij} \leq 0$ and, for each $A_i$, $i = 1, \ldots, n$, $A_i$ is a $Z$-matrix and $\mathbbm{1}^T A_i \geq 0$. Then $V$ has a nonnegative inverse (i.e., $V^{-1} \geq 0$).
\end{theorem}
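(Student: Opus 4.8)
\textbf{Proof plan for Theorem \ref{thm:2}.}

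The plan is to reduce the general block-triangular case to the irreducible case and then invoke the theory of $M$-matrices. First I would observe that if $V$ has the block-triangular form \eqref{blocktriangular} after conjugation by a permutation matrix $P$, then $V^{-1} \geq 0$ if and only if $(PVP^T)^{-1} \geq 0$, since $(PVP^T)^{-1} = P V^{-1} P^T$ and permutation conjugation merely reorders rows and columns. So without loss of generality I may assume $V$ itself is in the block upper-triangular form with diagonal blocks $A_1, \ldots, A_n$ and above-diagonal blocks $J_{ij} \leq 0$. The inverse of a block upper-triangular matrix is again block upper-triangular, with diagonal blocks $A_i^{-1}$ and above-diagonal blocks obtained as finite sums of products of the form $\pm A_{i}^{-1} J_{i k_1} A_{k_1}^{-1} J_{k_1 k_2} \cdots A_{k_\ell}^{-1}$ (this is the standard block-backsubstitution formula, which I would either cite or derive by induction on $n$). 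Because every $J_{ij} \leq 0$ contributes a factor of $-J_{ij} \geq 0$ and there is always an odd number of blocks $A^{-1}$ paired with an even or appropriate number of $(-J)$ factors so that signs work out — more carefully, each off-diagonal block of $V^{-1}$ is $(-1)$ times an alternating product, and one checks the number of minus signs matches the number of $J$-factors — each entry of $V^{-1}$ is a sum of products of nonnegative matrices, hence nonnegative, provided each $A_i^{-1} \geq 0$.

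The heart of the argument is therefore the irreducible case: I must show that an irreducible $Z$-matrix $A$ with $\mathbbm{1}^T A \geq 0$ (i.e., every column sum is nonnegative, and at least one is positive — or, reading the hypothesis carefully, $\mathbbm{1}^T A$ is a nonnegative \emph{vector} in the sense defined in the paper, so all column sums $\geq 0$ with not all zero) has a nonnegative inverse. The standard fact here is that a $Z$-matrix $A$ is a nonsingular $M$-matrix if and only if $A^{-1} \geq 0$ (this is one of the classical equivalent characterizations of $M$-matrices, e.g.\ in Berman--Plemmons \cite{berman1994nonnegative}). So it suffices to show $A$ is a nonsingular $M$-matrix. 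Write $A = sI - B$ with $B \geq 0$ and $s > 0$; then $A$ is a nonsingular $M$-matrix iff $s > \rho(B)$. For an irreducible $Z$-matrix, another classical equivalence says $A$ is a nonsingular $M$-matrix iff there is a vector $v > 0$ with $Av > 0$ componentwise, or — by considering $A^T$, which is also an irreducible $Z$-matrix — iff there is $u > 0$ with $A^T u > 0$. The hypothesis $\mathbbm{1}^T A \geq 0$ says exactly that $A^T \mathbbm{1} \geq 0$, i.e., $A^T$ applied to the strictly positive vector $\mathbbm{1}$ gives a nonnegative, nonzero vector. By irreducibility of $A^T$, I would upgrade this to strict positivity: the Perron--Frobenius machinery for irreducible $Z$-matrices (or a direct argument using that $B^T$ is irreducible nonnegative, so $B^T$ cannot fix the ``zero pattern'' of $A^T\mathbbm{1}$) gives that $\mathbbm{1}^T A \geq 0$, $\mathbbm{1}^T A \neq 0$, together with irreducibility, forces $s > \rho(B)$, hence $A$ is a nonsingular $M$-matrix and $A^{-1} \geq 0$.

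The main obstacle I anticipate is the sign bookkeeping in the block-backsubstitution step — verifying that every off-diagonal block of $V^{-1}$, when expanded via the recursion, comes out with the correct overall sign so that the $J_{ij} \leq 0$ hypotheses make each term a product of nonnegative matrices. The cleanest way around this is to avoid the explicit multi-index formula entirely: induct on the number of diagonal blocks $n$. For $n = 1$ the claim is the irreducible case handled above. For the inductive step, partition $V = \begin{bmatrix} A_1 & K \\ 0 & V' \end{bmatrix}$ where $K = [J_{12} \cdots J_{1n}] \leq 0$ and $V'$ is the trailing $(n-1)$-block-triangular matrix; then $V^{-1} = \begin{bmatrix} A_1^{-1} & -A_1^{-1} K (V')^{-1} \\ 0 & (V')^{-1} \end{bmatrix}$, and since $A_1^{-1} \geq 0$ (irreducible case), $(V')^{-1} \geq 0$ (induction), and $-K \geq 0$, the upper-right block $A_1^{-1}(-K)(V')^{-1}$ is a product of nonnegative matrices, hence nonnegative; thus $V^{-1} \geq 0$. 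This makes the sign issue trivial and localizes the only real work to the irreducible $M$-matrix characterization, which is standard. I would also double-check the edge case where some product yields the zero matrix, which is still $\geq 0$ in the weak sense needed, and confirm the hypothesis ``$\mathbbm{1}^T A_i \geq 0$'' is being read as ``nonnegative vector,'' consistent with the paper's convention following Definition \ref{matrices1}, so that at least one column sum of each $A_i$ is strictly positive — which is what drives the strict inequality $s > \rho(B)$.
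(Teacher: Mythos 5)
Your proposal is correct, but it reaches the conclusion by a genuinely different route than the paper. For the irreducible diagonal blocks, you argue directly via Perron--Frobenius: writing $A_i = sI - B$ with $B \geq 0$ irreducible, the hypothesis $\mathbbm{1}^T A_i \geq 0$ (nonnegative and nonzero, per the paper's convention) gives $B^T \mathbbm{1} \leq s\mathbbm{1}$ with at least one strict inequality, and irreducibility then forces $\rho(B) < s$, so $A_i$ is a nonsingular $M$-matrix. The paper instead proves the same block-level statement (its Lemma \ref{lemma11}) by identifying $A_i^T$ with the negative generator of a continuous-time Markov chain restricted to its transient states and citing Plemmons and the theory of absorbing chains; your version is more elementary and self-contained, while the paper's buys the probabilistic ``dwell time'' interpretation of $V^{-1}$ that it uses for intuition. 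You are also right to flag that irreducibility is essential to upgrade the weak inequality --- the paper's Lemma \ref{lemma11} omits the irreducibility hypothesis in its statement (a reducible $Z$-matrix with one zero column sum and no exit from that state is singular), and the gap is closed only because the lemma is applied to the irreducible blocks $A_i$. For the assembly step, you use block back-substitution / induction on the number of blocks, where $J_{ij} \leq 0$ makes each off-diagonal block of $V^{-1}$ a product of nonnegative matrices; the paper instead observes that $\sigma(V) = \bigcup_i \sigma(A_i)$ lies in the open right half-plane and that $J_{ij} \leq 0$ makes $V$ a $Z$-matrix, so $V$ is a nonsingular $M$-matrix and Lemma \ref{lemma3} applies. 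Both assemblies are valid; yours makes the nonnegativity of $V^{-1}$ entirely explicit, the paper's is shorter but leans again on the $M$-matrix characterization. Your inductive $2\times 2$ block formula $V^{-1} = \left[\begin{smallmatrix} A_1^{-1} & A_1^{-1}(-K)(V')^{-1} \\ 0 & (V')^{-1} \end{smallmatrix}\right]$ checks out and neatly sidesteps the sign bookkeeping you were worried about.
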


\begin{example}
\label{ex:EnvZOmpR4}
Reconsider the network \eqref{EnvZOmpr-simplified2} introduced in Section \ref{sec:introduction} and analyzed in Example \ref{ex:EnvZOmpR1}. The matrix $V$ in \eqref{FV2} is irreducible and therefore already in block triangular form. Since $V$ is a $Z$-matrix and $\mathbbm{1}^T V = (0, k_2, 0, 0, 0) \geq 0$, Theorem \ref{thm:2} is satisfied. Consequently, $V^{-1} \geq 0$, and the boundary reproduction number $\mathscr{R}_{\mathbf{x}^*} = \rho(FV^{-1})$ found in \eqref{bpn1} is valid by Theorem \ref{thm:main}.
\end{example}

\begin{example}
\label{ex:EnvZOmpR6}
Reconsider the network \eqref{EnvZOmpr-simplified2} introduced in Section \ref{sec:introduction} but instead of choosing the $k_6x_5$ term in the $x_2'$ equation we choose $k_6x_5$ in the $x_3'$ equation in \eqref{EnvZOmpr-simplified2-de}. This is consistent with heuristic \textbf{(H2)} and gives
\begin{equation}
\label{FV4}
\begin{aligned}
\mathcal{F} & = \langle 0,0,k_6x_5,0,0 \rangle,\\
\mathcal{V} & = \langle k_1x_1-k_4x_4,-k_6 x_5-k_1x_1+k_2x_2+k_5 x_2 y_2,k_3 x_3 y_1,-k_3 x_3 y_1+k_4 x_4,-k_5 x_2 y_2+k_6 x_5\rangle.
\end{aligned}
\end{equation}
This produces
\begin{equation}
\label{FV3}
F =\begin{bmatrix}
0 & 0 & 0 & 0 & 0 \\
0 & 0 & 0 & 0 & 0 \\
0 & 0 & 0 & 0 & k_6 \\
0 & 0 & 0 & 0 & 0 \\
0 & 0 & 0 & 0 & 0
\end{bmatrix}
\text{ and }
V = \begin{bmatrix}
k_1 & 0 & 0 & -k_4 & 0 \\
-k_1 & k_5 Y_{\text{tot}} + k_2  & 0 & 0 & -k_6 \\
0 & 0 & k_3 X_{\text{tot}} & 0 & 0 \\
0 & 0 & -k_3 X_{\text{tot}} & k_4 & 0 \\
0 & -k_5 Y_{\text{tot}} & 0 & 0 & k_6 \\
\end{bmatrix}
\end{equation}
This matrix $V$ in \eqref{FV3} is reducible. Permuting the matrix $V$ from \eqref{FV3} in the order $\{X_2, X_5, X_1, X_4, X_3\}$ gives the block triangular form
\[P V P^T = \begin{bmatrix}
k_5Y_{tot}  + k_2 & -k_6 & -k_1 & 0 & 0 \\
- k_5Y_{tot}  & k_6 & 0 & 0 & 0 \\
0 & 0 & k_1 & -k_4 & 0 \\
0 & 0 & 0 & k_4 & -k_3 X_{tot} \\
0 & 0 & 0 & 0 & k_3 X_{tot}
\end{bmatrix}\]
with blocks given by
\[A_1 = \left[ \begin{array}{cc} k_5Y_{tot} +k_2&-k_6 \\ -k_5Y_{tot} &k_6 \end{array} \right], A_2 = [ \;k_1 \; ], A_3 = [ \; k_4 \; ], \text{ and } A_4 = [ \; k_3 X_{tot} \; ].\]
We have that $\mathbbm{1}^T A_1 = (k_2,0) \geq 0$, $\mathbbm{1}^T A_2 = k_1 \geq 0$, $\mathbbm{1}^T A_3 = k_4 > 0$ and $\mathbbm{1}^T A_4 = k_3 X_{tot} > 0$. It follows from Theorem \ref{thm:2} that $V^{-1} \geq 0$ so that the boundary reproduction number produced by these matrices satisfies the conditions of Definition \ref{def:bpn} and Theorem \ref{thm:main}. 

In this case, the choices $\mathcal{F}$ and $\mathcal{V}$ in \eqref{FV4} generate the same boundary reproduction number $\mathscr{R}_{\mathbf{x}^*}$ \eqref{bpn1} as the $\mathcal{F}$ and $\mathcal{V}$ in \eqref{FV5}. In general, two valid choices of $\mathcal{F}$ and $\mathcal{V}$ may produce different boundary reproduction numbers (see Examples \ref{ex:vectorborne}); however, in all cases the same threshold for stability and instability must be attained by Theorem \ref{thm:main}.
\end{example}


\subsection{Network Structure Result}
\label{sec:networkresult}

In this section, we extend the matrix structure results of Section \ref{sec:matrixresult} to a particular network structure.

Consider a weighted directed graph $G = (V,E)$, where $V = \{ V_1, \ldots, V_n \}$ is a set of vertices and $E \subseteq V \times V$ is a set of directed edges. We represent edges $E_k = (V_i, V_j)$ as $V_i \to V_j$ and allow self-edges (i.e., edges of the form $(V_i, V_i) \in E$). A \emph{connected component} of $G$ is a maximal set of vertices connected by undirected paths. A \emph{strongly connected component} is a maximal set of vertices connected by directed paths. A \emph{terminal strongly connected component} is a strongly connected component with outward edges.

We now introduce the following network structure related to chemical reaction networks and siphons.

\begin{definition}
\label{def:xreduced}
Let $\mathcal{X} \subseteq \mathcal{S}$ be a critical siphon of a chemical reaction network $(\mathcal{S},\mathcal{R})$ with corresponding chemical reaction system \eqref{de}. The $\mathcal{X}$-reduced network is the graph $G_{\mathcal{X}} = (V_{\mathcal{X}},E_{\mathcal{X}})$ where the vertex set is $V_{\mathcal{X}} = \mathcal{X} \cup \{\emptyset\}$ and the edge set $E_{\mathcal{X}}$ is defined as follows:
\begin{enumerate}
    \item $(X_i, X_j) \in E_{\mathcal{X}}$ if there is a reaction $R_k \in \mathcal{R}$ such that $X_i$ is a reactant and $X_j$ is a product (i.e., $\alpha_{ik} > 0$ and $\beta_{jk} > 0$);
    \item $(X_i, X_i) \in E_{\mathcal{X}}$ if there is a reaction $R_k \in \mathcal{R}$ such that $X_i \in \mathcal{X}$ is both a reactant and a product and the amount of $X_i$ increases (i.e., $\alpha_{ik} > 0$, $\beta_{ik} > 0$, and $\beta_{ik}-\alpha_{ik} > 0$); and
    \item $(X_i, \emptyset) \in E_{\mathcal{X}}$ if there is a reaction $R_k \in \mathcal{R}$ such that $X_i$ is a reactant and no $X_j \in \mathcal{X}$ is a product (i.e., $\alpha_{ik} > 0$ and $\beta_{jk} = 0$ for all $X_j \in \mathcal{X}$) or a reaction $R_k \in \mathcal{R}$ such that $X_i \in \mathcal{X}$ is both a reactant and a product and the amount of $X_i$ decreases (i.e., $\alpha_{ik} >0$, $\beta_{ik} > 0$, and $\beta_{ik}-\alpha_{ik} < 0$).
\end{enumerate}
\end{definition}

To each edge $(X_i,X_j) \in E_{\mathcal{X}}$, $(X_i, X_i) \in E_{\mathcal{X}}$, or $(X_i, \emptyset) \in E_{\mathcal{X}}$, we associate the sum of the reaction rates $R_j(\mathbf{x})$ of the original reactions mapped to the edge by Definition \ref{def:xreduced}. Note that a single reaction rate $R_j(\mathbf{x})$ may be associated with multiple edges in $G_{\mathcal{X}}$. For example, the disassociation reaction $X_1 \stackrel{R(\mathbf{x})}{\longrightarrow} X_2 + X_3$ with $X_1, X_2, X_3 \in \mathcal{X}$ corresponds to weighted edges $X_1 \stackrel{R(\mathbf{x})}{\longrightarrow} X_2 \quad \text{and} \quad X_1 \stackrel{R(\mathbf{x})}{\longrightarrow} X_3$ in $G_{\mathcal{X}}$. Similarly, a single edge in $G_{\mathcal{X}}$ may be associated with multiple reactions from the original network. For example, the reactions $X_1 \stackrel{R_{1}(\mathbf{x})}{\longrightarrow} X_2 + Y_1 \quad \text{and} \quad X_1 \stackrel{R_{2}(\mathbf{x})}{\longrightarrow} X_2 + Y_2$, where $X_1, X_2 \in \mathcal{X}$ and $Y_1, Y_2 \in \mathcal{Y}$, are condensed as $X_1 \stackrel{R_{1}(\mathbf{x}) + R_{2}(\mathbf{x})}{\longrightarrow} X_2.$

\begin{definition}
\label{def:splitting}
Consider the $\mathcal{X}$-reduced network $G_{\mathcal{X}} = (V_{\mathcal{X}}, E_{\mathcal{X}})$ of a chemical reaction network $(\mathcal{S},\mathcal{R})$ with a critical siphon $\mathcal{X}$. A \emph{splitting} of $G_{\mathcal{X}}$ is any pair of networks $G^{\mathcal{F}}_{\mathcal{X}} = (V_{\mathcal{X}}, E^{\mathcal{F}}_{\mathcal{X}})$ and $G^{\mathcal{V}}_{\mathcal{X}} = (V_{\mathcal{X}}, E^{\mathcal{V}}_{\mathcal{X}})$ such that $E^{\mathcal{F}}_{\mathcal{X}} \cup E^{\mathcal{V}}_{\mathcal{X}} = E_{\mathcal{X}}$ and $E^{\mathcal{F}}_{\mathcal{X}} \cap E^{\mathcal{V}}_{\mathcal{X}} = \emptyset$.
\end{definition}

We now give sufficient conditions on the $\mathcal{X}$-reduced network to guarantee the validity of the boundary reproduction number. To this end, we require that $F$ and $V$ in \eqref{FV} capture the behavior of system \eqref{de} on the boundary face $B_{\mathcal{X}}$. Thus, we impose the following additional nondegeneracy assumption on the reaction rates $R_i(\mathbf{x})$, supplementing assumptions \textbf{(A1)}--\textbf{(A3)} from Section \ref{sec:crn}:

\begin{enumerate}
\item[\textbf{(A4)}] The dependence of the reaction rates on the reactant species does not vanish at boundary steady states $E_{\mathcal{X}}$ (i.e., if $\alpha_{ij} > 0$ and $\mathbf{x}^* = (\mathbf{0}, \tilde{\mathbf{y}}^*) \in E_{\mathcal{X}}$, then $\displaystyle \frac{\partial R_j}{\partial x_i}(\mathbf{x}^*) > 0$).
\end{enumerate}

Condition \textbf{(A4)} ensures that every reaction contributes to $F$ and $V$ as defined in \eqref{FV}. It holds for mass-action kinetics when $\alpha_{ij} = 1$, but fails when the stoichiometric coefficient is greater than one (i.e., $\alpha_{ij} > 1$). The following result on the $\mathcal{X}$-reduced network is proved in Appendix \ref{app:c}.

\begin{theorem}
\label{thm:3}
Let $\mathcal{X} \subseteq \mathcal{S}$ be a critical siphon of a chemical reaction network $(\mathcal{S},\mathcal{R})$ with corresponding chemical reaction system \eqref{de}. Suppose kinetic assumptions \textbf{(A1)}--\textbf{(A4)} hold. Further, suppose there exists a splitting $G_{\mathcal{X}} = G^{\mathcal{F}}_{\mathcal{X}} \cup G^{\mathcal{V}}_{\mathcal{X}},$ where $G^{\mathcal{F}}_{\mathcal{X}} = (V_{\mathcal{X}}, E^{\mathcal{F}}_{\mathcal{X}})$ and $G^{\mathcal{V}}_{\mathcal{X}} = (V_{\mathcal{X}}, E^{\mathcal{V}}_{\mathcal{X}})$ satisfying:
\begin{enumerate}
\item $E^{\mathcal{F}}_{\mathcal{X}}$ is nonempty;
\item the unique terminal strongly connected component of $G^{\mathcal{V}}_{\mathcal{X}}$ is $\{\emptyset\}$;
\item $E^{\mathcal{V}}_{\mathcal{X}}$ contains no self-loops (i.e., all self-loops, if any, belong to $E^{\mathcal{F}}_{\mathcal{X}}$); and
\item if $e_i, e_j \in E^{\mathcal{V}}_{\mathcal{X}}$ correspond to the same edge in the original network $(\mathcal{S},\mathcal{R})$, then:
\begin{enumerate}
\item $e_i$ and $e_j$ have the same source vertex; and
\item the target vertices of $e_i$ and $e_j$ belong to distinct strongly connected components.
\end{enumerate}
\end{enumerate}
Selecting $\mathcal{F}$ to contain terms from the right-hand side of \eqref{de2} corresponding to the targets of reactions in $E^{\mathcal{V}}_{\mathcal{X}}$ and $\mathcal{V}$ to contain the remaining terms, the matrix $V$ has a nonnegative inverse (i.e., $V^{-1} \geq 0$).
\end{theorem}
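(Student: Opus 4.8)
The plan is to derive Theorem~\ref{thm:3} from Theorem~\ref{thm:2} by showing that the stated hypotheses on the splitting of the $\mathcal{X}$-reduced network force $V$ into precisely the block-triangular $Z$-matrix normal form required there. The first step is to extract the consequences of \textbf{(A4)}. Combining \textbf{(A4)} with \textbf{(A2)} and the siphon property, every reaction $R_k$ that produces a species of $\mathcal{X}$ has exactly one species of $\mathcal{X}$ among its reactants, with stoichiometric coefficient one: two siphon reactants, or a siphon reactant of coefficient at least two, would make $R_k(\mathbf{x})$ vanish to second order in the siphon variables at $\mathbf{x}^*=(\mathbf{0},\tilde{\mathbf{y}}^*)$, contradicting $\partial R_k/\partial x_i(\mathbf{x}^*)>0$; and a reaction producing a siphon species with no siphon reactant is excluded by Definition~\ref{def:siphon}. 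Hence each edge $(X_i,X_j)$, $(X_i,X_i)$, or $(X_i,\emptyset)$ of $G_{\mathcal{X}}$ carries a well-defined strictly positive weight $c_k=\partial R_k/\partial x_i(\mathbf{x}^*)$, and $G_{\mathcal{X}}$ faithfully encodes the linearization of $\mathbf{f}$ at $\mathbf{x}^*$.

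Next I would translate the prescribed splitting into matrix language. The vector $\mathcal{F}$ collects certain positive production terms; by \textbf{(A3)} their Jacobian entries are nonnegative, and by Step~1 and hypothesis~(1) at least one retained production term is generated by a reaction with a siphon reactant, so $\partial\mathcal{F}/\partial\tilde{\mathbf{x}}$ has a strictly positive entry and $F\geq 0$ in the sense of Definition~\ref{matrices1}. For $V$: the off-diagonal entry $V_{ji}$ ($j\neq i$) is minus the total weight of the $\mathcal{V}$-edges $X_i\to X_j$, so $V$ is a $Z$-matrix, while the diagonal entry $V_{ii}$ is the total weight of the $\emptyset$-edges (and net-decrease self-reactions) out of $X_i$ and is nonnegative, because hypothesis~(3) keeps every autocatalytic self-loop in $\mathcal{F}$ and so does not diminish the diagonal. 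The digraph on $\mathcal{X}$ whose edge set is the off-diagonal support of $V$ is exactly $G^{\mathcal{V}}_{\mathcal{X}}$ with the vertex $\emptyset$ and all self-loops deleted; ordering the strongly connected components of $G^{\mathcal{V}}_{\mathcal{X}}$ so that every edge runs from a later component to an earlier one gives a permutation $P$ with $PVP^T$ in the form \eqref{blocktriangular}, one irreducible diagonal block $A_i$ per non-$\emptyset$ component, off-diagonal blocks $J_{ij}\leq 0$ (their entries are negated edge weights), and each $A_i$ a $Z$-matrix as a principal submatrix of $V$.

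The heart of the argument is $\mathbbm{1}^T A_i\geq 0$ with $\mathbbm{1}^T A_i\neq 0$ for each diagonal block $A_i$, corresponding to a component $C\neq\{\emptyset\}$. I would argue column by column: the $X_\ell$-column of $A_i$ receives contributions only from reactions $R_k$ whose unique siphon reactant is $X_\ell$, and for each such $R_k$ the consumption of $X_\ell$ contributes $+c_k$ to $V_{\ell\ell}$, while by hypothesis~(4)(b) at most one retained $\mathcal{V}$-edge of $R_k$ has its target inside $C$, contributing a single $-c_k$ to a within-block entry; net, $R_k$ adds a nonnegative amount to that column sum. Since $C$ is not the unique terminal strongly connected component of $G^{\mathcal{V}}_{\mathcal{X}}$ (hypothesis~(2)), some vertex $X_\ell\in C$ carries a $\mathcal{V}$-edge leaving $C$, to $\emptyset$ or to another component; that edge's reaction still contributes $+c_k$ to $V_{\ell\ell}$ inside $A_i$ but its $-c_k$, if any, lands outside $A_i$, so that column sum is strictly positive; in particular $\mathbbm{1}^T A_i\neq 0$, and a singleton component is handled the same way, giving $V_{\ell\ell}>0$. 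Now $F\geq 0$, $V$ is block-triangularizable as in \eqref{blocktriangular} with $J_{ij}\leq 0$, each $A_i$ a $Z$-matrix with $\mathbbm{1}^T A_i\geq 0$, so Theorem~\ref{thm:2} yields $V^{-1}\geq 0$.

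I expect the main obstacle to be the bookkeeping in the third paragraph: showing every diagonal block has nonnegative, not-identically-zero column sums. Hypotheses~(2), (3) and (4)(b) are used there in an interlocking way --- drainage to $\emptyset$, no diagonal erosion by autocatalysis, and at most one retained disassociation edge per component --- and some care is needed to handle, uniformly, reactions that produce several siphon species, net-conserving catalytic terms, $\emptyset$-edges, and higher product multiplicities. A secondary point is verifying that the component ordering of $G^{\mathcal{V}}_{\mathcal{X}}$ yields exactly the normal form \eqref{blocktriangular} with irreducible diagonal blocks, which again rests on hypothesis~(2) ruling out stranded components and should be stated explicitly.
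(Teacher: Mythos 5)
Your proposal follows essentially the same route as the paper's proof: reduce to Theorem \ref{thm:2} by showing that the splitting forces $F \geq 0$ and forces $V$ into block-triangular form with $Z$-matrix diagonal blocks having nonnegative, nonzero column sums, using hypothesis (3) to protect the diagonal, hypothesis (4)(b) to cap each column's within-block off-diagonal loss at the matching diagonal gain, and hypothesis (2) to produce a strictly positive column sum in each block. Your preliminary observation that \textbf{(A4)} forces each siphon-producing reaction to have a unique siphon reactant of coefficient one makes explicit something the paper uses only implicitly, but the argument is otherwise the same, down to the same lightly treated edge cases (net-conserving catalytic productions and product multiplicities greater than one) that you flag at the end.
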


\begin{example}
\label{ex:EnvZOmpR5}
Reconsider the network \eqref{EnvZOmpr-simplified2} introduced in Section \ref{sec:introduction} and analyzed in Examples \ref{ex:EnvZOmpR1}, \ref{ex:EnvZOmpR2}, \ref{ex:EnvZOmpR4}, and \ref{ex:EnvZOmpR6}. Removing the species in $\mathcal{Y} = \{ Y_1, Y_2 \}$ from \eqref{EnvZOmpr-simplified2} gives
\[
\begin{tikzcd}
X_1 \arrow[rr,"k_1"] & & X_2 \arrow[rr,"k_2"] & & \emptyset \\[-0.1in]
X_3 \arrow[rr,"k_3"] & & X_4 \arrow[rr,"k_4"] & & X_1 \\[-0.1in]
X_2 \arrow[rr,"k_5"] & & X_5 \arrow[rr,"k_6"] & & X_2 + X_3
\end{tikzcd}
\]
where the rate constants uniquely determine the rate form from the mass-action system \eqref{EnvZOmpr-simplified2-de}. In accordance with steps 1–3 of Definition \ref{def:xreduced}, we split the reaction $X_5 \stackrel{k_6}{\longrightarrow} X_2 + X_3$ into $X_5 \stackrel{k_6}{\longrightarrow} X_2$ and $X_5 \stackrel{k_6}{\longrightarrow} X_3$ to obtain the following $\mathcal{X}$-reduced interaction network $G_{\mathcal{X}} = (V_{\mathcal{X}},E_{\mathcal{X}})$ and potential splittings:
\[
\begin{array}{c}
\begin{array}{c}
\text{\textbf{$\mathcal{X}$-reduced network}}\\
\begin{tikzcd}
X_1 \arrow[rr,"k_1"] & & X_2 \arrow[rr,"k_2"] \arrow[ddrr,xshift=0.3ex,yshift=0.3ex,"k_5"] & & \emptyset \\
 & & & & \\
X_4 \arrow[uu,"k_4"] & & X_3 \arrow[ll,"k_3"] & & X_5. \arrow[lluu,xshift=-0.3ex,yshift=-0.3ex,red,"k_6"] \arrow[ll,red,"k_6"]
\end{tikzcd}
\end{array} \\ \\
\begin{array}{c}
\text{\textbf{Splitting \#1:}}\\
{\color{red}\text{$G^{\mathcal{F}}_{\mathcal{X}} = (V_{\mathcal{X}},E^{\mathcal{F}}_{\mathcal{X}})$}}\\
\text{$G^{\mathcal{V}}_{\mathcal{X}} = (V_{\mathcal{X}},E^{\mathcal{V}}_{\mathcal{X}})$}\\
\begin{tikzcd}
X_1 \arrow[rr,"k_1"] & & X_2 \arrow[rr,"k_2"] \arrow[ddrr,xshift=0.3ex,yshift=0.3ex,"k_5"] & & \emptyset \\
 & & & & \\
X_4 \arrow[uu,"k_4"] & & X_3 \arrow[ll,"k_3"] & & X_5. \arrow[lluu,xshift=-0.3ex,yshift=-0.3ex,red,dashed,"k_6"] \arrow[ll,"k_6"]
\end{tikzcd}
\end{array} \quad
\begin{array}{c}
\text{\textbf{Splitting \#2:}}\\
{\color{red}\text{$G^{\mathcal{F}}_{\mathcal{X}} = (V_{\mathcal{X}},E^{\mathcal{F}}_{\mathcal{X}})$}}\\
\text{$G^{\mathcal{V}}_{\mathcal{X}} = (V_{\mathcal{X}},E^{\mathcal{V}}_{\mathcal{X}})$}\\
\begin{tikzcd}
X_1 \arrow[rr,"k_1"] & & X_2 \arrow[rr,"k_2"] \arrow[ddrr,xshift=0.3ex,yshift=0.3ex,"k_5"] & & \emptyset \\
 & & & & \\
X_4 \arrow[uu,"k_4"] & & X_3 \arrow[ll,"k_3"] & & X_5.  \arrow[lluu,xshift=-0.3ex,yshift=-0.3ex,"k_6"] \arrow[ll,red,dashed,"k_6"]
\end{tikzcd}
\end{array}
\end{array}
\]

The splittings of the $X$-reduced network (top, center) into $G_{\mathcal{X}}^{\mathcal{F}}$ and $G_{\mathcal{X}}^{\mathcal{V}}$ correspond to the division of terms included in $\mathcal{F}$ and $\mathcal{V}$, respectively. This representation provides insight into how to split the network and why some splittings succeed while others do not. Crucially, we must remove one of the red edges from $E^{\mathcal{V}}_{\mathcal{X}}$ to avoid violating Condition 4(b) of Theorem \ref{thm:3}.

Splitting \#1 (bottom, left) corresponds to the $\mathcal{F}$ and $\mathcal{V}$ chosen in Example \ref{ex:EnvZOmpR4}. Specifically, we remove the edge $X_5 \stackrel{k_6}{\longrightarrow} X_2$ (dashed red) from $E^{\mathcal{V}}_{\mathcal{X}}$ and observe that the resulting network has no self-loops, with strongly connected components $\{ X_1, X_2, X_3, X_4, X_5 \}$ and $\{ \emptyset \}$, where the singleton $\{ \emptyset \}$ is terminal, and no strongly connected component contains multiple edges labeled with the same rate function. Consequently, Theorem \ref{thm:3} is satisfied, guaranteeing a matrix $V$ that satisfies the block triangularization assumption of Theorem \ref{thm:2}.

Splitting \#2 (bottom, right) corresponds to the $\mathcal{F}$ and $\mathcal{V}$ chosen in Example \ref{ex:EnvZOmpR6}. Specifically, we remove the edge $X_3 \stackrel{k_3}{\longrightarrow} X_4$ (dashed red) from $E^{\mathcal{V}}_{\mathcal{X}}$ and observe that the resulting network has no self-loops, with strongly connected components $\{ X_1 \}, \{ X_2, X_5 \}, \{ X_3 \}, \{ X_4 \}$, and $\{ \emptyset \}$, where the singleton $\{ \emptyset \}$ is terminal, and no strongly connected component contains multiple edges labeled with the same rate function. Consequently, Theorem \ref{thm:3} is satisfied, guaranteeing a matrix $V$ that satisfies the block triangularization assumption of Theorem \ref{thm:2}.
\end{example}

\section{Examples}
\label{sec:examples}

In this section, we present several examples illustrating the application of Theorems \ref{thm:main}, \ref{thm:2}, and \ref{thm:3}. In Section \ref{sec:infectiousexamples}, we show how the boundary reproduction number approach includes the basic reproduction number from mathematical epidemiology. In Section \ref{sec:biochemicalexamples}, we give biochemically motivated examples without interpretations as models of infectious disease spread.

\subsection{Infectious Disease Models}
\label{sec:infectiousexamples}

Here, we consider several examples from the mathematical epidemiology literature. A recurring theme is that a computationally efficient choice of $\mathcal{F}$ uses heuristic \textbf{(H1)}, since infection events where contact between a susceptible and infectious individual transmits infection can be represented as autocatalytic reactions $S + I \to 2I$.

\begin{example}
Consider the SIR (Susceptible–Infectious–Recovered) disease model with births, deaths, and a saturating incidence rate \cite{VANDENDRIESSCHE2002,rahman2012global}. For simplicity, we omit the time delays present in \cite{rahman2012global}. This can be represented in the interaction form used in mathematical biochemistry as:
\begin{equation}
\label{SIR}
\begin{tikzcd}
\emptyset \arrow[rr,yshift=0.5ex,"\lambda"] & & S \arrow[ll,yshift=-0.5ex,"\mu S"] & I \arrow[rr,"\mu I"] & & \emptyset & S + I \arrow[rr,"\frac{\beta SI}{1 + \alpha I}"] & & 2I \\[-0.1in]
 & & & R \arrow[rr,"\mu R"] & & \emptyset & I \arrow[rr,"\gamma I"] & & R
\end{tikzcd}
\end{equation}
where all reaction rates are mass-action except the saturating incidence rate, which satisfies \textbf{(A1)}–\textbf{(A3)}. The network \eqref{SIR} corresponds to the system of differential equations
\[
\begin{cases}
S' = \displaystyle \lambda - \frac{\beta SI}{1 + \alpha I} - \mu S, \\
I' = \displaystyle \frac{\beta SI}{1 + \alpha I} - (\gamma + \mu) I, \\
R' = \gamma I - \mu R.
\end{cases}
\]
The system has the siphon $\mathcal{X} = \{ I, R \}$, which is critical. There are no conservation laws, and the only $\mathcal{X}$-free boundary steady state is $\displaystyle \mathbf{x}^* = \left(\frac{\lambda}{\mu}, 0, 0\right)$.

Following heuristic \textbf{(H1)}, we choose 
\[
\mathcal{F} = \left\langle \frac{\beta SI}{1 + \alpha I}, 0 \right\rangle, \quad
\mathcal{V} = \langle (\gamma + \mu) I, -\gamma I + \mu R \rangle.
\]
Taking the Jacobians of $\mathcal{F}$ and $\mathcal{V}$ with respect to $\mathcal{X} = \{I, R\}$ and evaluating at $\mathbf{x}^*$ gives
\[
F = \begin{bmatrix}
\displaystyle \frac{\lambda \beta}{\mu} & 0 \\
0 & 0
\end{bmatrix}, \quad
V = \begin{bmatrix}
\gamma + \mu & 0 \\
-\gamma & \mu
\end{bmatrix}, \quad
V^{-1} = \begin{bmatrix}
\displaystyle \frac{1}{\gamma + \mu} & 0 \\
\displaystyle \frac{\gamma}{(\gamma + \mu)\mu} & \displaystyle \frac{1}{\mu}
\end{bmatrix}, \quad
F V^{-1} = \begin{bmatrix}
\displaystyle \frac{\lambda \beta}{(\gamma + \mu)\mu} & 0 \\
0 & 0
\end{bmatrix}.
\]
Thus, $V$ satisfies the requirements of Theorem \ref{thm:main}. The boundary reproduction number of the $\mathcal{X}$-free boundary steady state $\mathbf{x}^*$ is $\mathscr{R}_{\mathbf{x}^*} = \rho(FV^{-1}) = \displaystyle \frac{\lambda \beta}{(\gamma + \mu)\mu}$. Hence, $\mathbf{x}^*$ is locally asymptotically stable if $\mathscr{R}_{\mathbf{x}^*} < 1$ and unstable if $\mathscr{R}_{\mathbf{x}^*} > 1$. In the context of infectious disease models, $\mathbf{x}^*$ corresponds to the disease-free state; the disease dies out if $\mathscr{R}_{\mathbf{x}^*} < 1$ (stable disease-free state) and spreads if $\mathscr{R}_{\mathbf{x}^*} > 1$ (unstable disease-free state). Note that $\tilde{\mathcal{X}} = \{ I \}$ is also a siphon with $\mathbf{x}^* \in E_\mathcal{X}$. Analyzing this siphon yields the same boundary reproduction number and stability conditions.
\end{example}

\begin{example}[Connection with Multistrain Models]
Consider the infectious disease model with two strains, generically labeled $I_1$ and $I_2$, with births, deaths, and saturating incidence rates \cite{rahman2012global}. We omit time delays. Such a model can be represented as a chemical reaction network with the following interaction diagram:
\begin{equation}
\label{multistrain}
\begin{tikzcd}
\emptyset \arrow[rr,yshift=0.5ex,"\lambda"] & & S \arrow[ll,yshift=-0.5ex,"\mu S"] & I_1 \arrow[rr,"\mu I_1"] & & \emptyset & I_1 \arrow[rr,"\gamma_1 I_1"] & & R \\[-0.1in]
S + I_1 \arrow[rr,"\frac{\beta_1 SI_1}{1 +\alpha_1 I_1}"] & & 2I_1 & I_2 \arrow[rr,"\mu I_2"] & & \emptyset & I_2 \arrow[rr,"\gamma_2 I_2"] & & R\\[-0.1in]
S + I_2 \arrow[rr,"\frac{\beta_2 SI_2}{1 + \alpha_2 I_2}"] & & 2I_2 & R \arrow[rr,"\mu R"] & & \emptyset & & & \\[-0.1in]
\end{tikzcd}
\end{equation}
All reaction rates are mass-action except the incidence rates. The rates $\displaystyle \frac{\beta_1 SI_1}{1 +\alpha_1 I_1}$ and $\displaystyle \frac{\beta_2 SI_2}{1 + \alpha_2 I_2}$ satisfy assumptions \textbf{(A1)}--\textbf{(A3)}. The network \eqref{multistrain} corresponds to the system of differential equations
\[
\left\{
\begin{aligned}
S' & = \lambda - \frac{\beta_1 SI_1}{1 +\alpha_1 I_1} - \frac{\beta_2 SI_2}{1 + \alpha_2 I_2} - \mu S, \\
I_1' & = \frac{\beta_1 SI_1}{1 +\alpha_1 I_1} - (\gamma_1 + \mu) I_1, \\
I_2' & = \frac{\beta_2 SI_2}{1 + \alpha_2 I_2} - (\gamma_2 + \mu) I_2, \\
R' & = \gamma_1 I_1 + \gamma_2 I_2 - \mu R.
\end{aligned}
\right.
\]
Three siphons for this model are $\mathcal{X}_1 = \{ I_1, I_2, R \}$, $\mathcal{X}_2 = \{ I_1 \}$, and $\mathcal{X}_3 = \{ I_2 \}$. All three siphons are critical, so we compute the boundary reproduction numbers with respect to specific steady states rather than the siphons. There are no conservation laws. The $\mathcal{X}_1$-free boundary steady state is
\[
\mathbf{x}_1^* = \left(\frac{\lambda}{\mu}, 0, 0, 0\right),
\]
the $\mathcal{X}_2$-free boundary steady state is
\[
\mathbf{x}_2^* = \left( \frac{\lambda \alpha_2 + \gamma_2 + \mu}{\alpha_2 \mu + \beta_2}, 0 , \frac{\lambda \beta_2 - (\gamma_2 + \mu)\mu}{\alpha_2 \gamma_2 \mu + \alpha_2 \mu^2 + \beta_2 \gamma_2 + \beta_2 \mu}, \frac{\gamma_2 (\lambda \beta_2 - (\gamma_2 + \mu)\mu)}{(\alpha_2 \gamma_2 \mu + \alpha_2 \mu^2 + \beta_2 \gamma_2 + \beta_2 \mu) \mu} \right),
\]
which is nonnegative if $\lambda \beta_2 > (\gamma_2 + \mu)\mu$, and the $\mathcal{X}_3$-free boundary steady state
\[
\mathbf{x}_3^* = \left( \frac{\lambda \alpha_1 + \gamma_1 + \mu}{\alpha_1 \mu + \beta_1}, \frac{\lambda \beta_1 - (\gamma_1 + \mu)\mu}{\alpha_1 \gamma_1 \mu + \alpha_1 \mu^2 + \beta_1 \gamma_1 + \beta_1 \mu}, 0, \frac{\gamma_1 (\lambda \beta_1 - (\gamma_1 + \mu)\mu)}{(\alpha_1 \gamma_1 \mu + \alpha_1 \mu^2 + \beta_1 \gamma_1 + \beta_1 \mu)\mu} \right),
\]
which is nonnegative if $\lambda \beta_1 > (\gamma_1 + \mu)\mu$.

We now compute the boundary reproduction numbers of the three steady states. In all cases, we use heuristic \textbf{(H1)} and include terms corresponding to all relevant infection events (i.e., autocatalytic reactions) in $\mathcal{F}$. For the $\mathcal{X}_1$-free boundary steady state $\mathbf{x}_1^*$, we have
\[
\begin{aligned}
\mathcal{F}_1 & = \Big\langle \frac{\beta_1 SI_1}{1 +\alpha_1 I_1}, \frac{\beta_2 SI_2}{1 + \alpha_2 I_2} , 0 \Big\rangle, \\
\mathcal{V}_1 & = \langle  (\gamma_1 + \mu) I_1, (\gamma_2 + \mu) I_2, -\gamma_1 I_1 - \gamma_2 I_2 + \mu R  \rangle,
\end{aligned}
\]
so that
\[
F_1 = \begin{bmatrix}
\displaystyle\frac{\lambda \beta_1}{\mu} & 0 & 0 \\
0 & \displaystyle\frac{\lambda \beta_2}{\mu} & 0 \\
0 & 0 & 0
\end{bmatrix}
\quad \text{and} \quad
V_1 = \begin{bmatrix}
\gamma_1 + \mu & 0 & 0 \\
0 & \gamma_2 + \mu & 0 \\
-\gamma_1 & - \gamma_2 & \mu
\end{bmatrix}.
\]
It follows that
\[
V_1^{-1} = \begin{bmatrix}
\displaystyle\frac{1}{\gamma_1 + \mu} & 0 & 0 \\
0 & \displaystyle\frac{1}{\gamma_2 + \mu} & 0 \\
\displaystyle\frac{\gamma_1}{(\gamma_1 + \mu) \mu} & \displaystyle\frac{\gamma_2}{(\gamma_2 + \mu) \mu} & \displaystyle\frac{1}{\mu}
\end{bmatrix}
\quad \text{and} \quad
F_1 V_1^{-1} = \begin{bmatrix}
\displaystyle\frac{\lambda \beta_1}{\mu(\gamma_1 + \mu)} & 0 & 0 \\
0 & \displaystyle\frac{\lambda \beta_2}{\mu(\gamma_2 + \mu)} & 0 \\
0 & 0 & 0
\end{bmatrix}.
\]
Consequently,
\[
\mathscr{R}_{\mathbf{x}_1^*} = \rho(F_1 V_1^{-1}) = \max \left\{ \frac{\lambda \beta_1}{\mu(\gamma_1 + \mu)} , \frac{\lambda \beta_2}{\mu(\gamma_2 + \mu)} \right\} = \max \{ \mathscr{R}_{\mathbf{x}_1^*}^{(1)}, \mathscr{R}_{\mathbf{x}_1^*}^{(2)} \}.
\]
Hence, $\mathbf{x}_1^*$ is locally asymptotically stable if both $\mathscr{R}_{\mathbf{x}_1^*}^{(1)} < 1$ and $\mathscr{R}_{\mathbf{x}_1^*}^{(2)} < 1$, and unstable if either $\mathscr{R}_{\mathbf{x}_1^*}^{(1)} > 1$ or $\mathscr{R}_{\mathbf{x}_1^*}^{(2)} > 1$. Note that $\tilde{\mathcal{X}}_1 = \{ I_1, I_2 \}$ is also a siphon with $\mathbf{x}_1^* \in E_{\mathcal{X}}$, and this choice yields the same boundary reproduction number and stability regions as using $\mathcal{X}_1 = \{ I_1, I_2, R \}$.

For the $\mathcal{X}_2$-free boundary steady state $\mathbf{x}_2^*$, we have
\[
\mathcal{F}_2 = \frac{\beta_1 SI_1}{1 + \alpha_1 I_1}, \quad \mathcal{V}_2 = (\gamma_1 + \mu) I_1, \quad F_2 = \frac{\beta_1 (\lambda \alpha_2 + \gamma_2 + \mu)}{\alpha_2 \mu + \beta_2}, \quad V_2 = \gamma_1 + \mu,
\]
so that
\[
\mathscr{R}_{\mathbf{x}_2^*} = \frac{\beta_1 (\lambda \alpha_2 + \gamma_2 + \mu)}{(\alpha_2 \mu + \beta_2)(\gamma_1 + \mu)}.
\]
Hence, $\mathbf{x}_2^*$ is locally asymptotically stable if $\mathscr{R}_{\mathbf{x}_2^*} < 1$ and unstable if $\mathscr{R}_{\mathbf{x}_2^*} > 1$.

For the $\mathcal{X}_3$-free boundary steady state $\mathbf{x}_3^*$, we have
\[
\mathcal{F}_3 = \frac{\beta_2 SI_2}{1 + \alpha_2 I_2}, \quad \mathcal{V}_3 = (\gamma_2 + \mu) I_2, \quad F_3 = \frac{\beta_2 (\lambda \alpha_1 + \gamma_1 + \mu)}{\alpha_1 \mu + \beta_1}, \quad V_3 = \gamma_2 + \mu,
\]
so that
\[
\mathscr{R}_{\mathbf{x}_3^*} = \frac{\beta_2 (\lambda \alpha_1 + \gamma_1 + \mu)}{(\alpha_1 \mu + \beta_1)(\gamma_2 + \mu)}.
\]
Hence, $\mathbf{x}_3^*$ is locally asymptotically stable if $\mathscr{R}_{\mathbf{x}_3^*} < 1$ and unstable if $\mathscr{R}_{\mathbf{x}_3^*} > 1$.

In the context of multistrain infectious disease models, $\mathbf{x}_1^*$ corresponds to the disease-free state, $\mathbf{x}_2^*$ to the $I_1$-free/$I_2$-endemic state, and $\mathbf{x}_3^*$ to the $I_1$-endemic/$I_2$-free state. The value $\mathscr{R}_{\mathbf{x}_1^*}$ is the basic reproduction number of the disease since it determines whether the disease (either strain) will spread. The values $\mathscr{R}_{\mathbf{x}_1^*}^{(1)}$ and $\mathscr{R}_{\mathbf{x}_1^*}^{(2)}$ are the basic reproduction numbers for the individual strains; that is, they determine whether the introduction of each strain can destabilize the disease-free state and spread independently of the other strain. Note also that $\mathscr{R}_{\mathbf{x}_1^*}^{(1)} > 1$ is required for $\mathbf{x}_2^* \geq 0$ and $\mathscr{R}_{\mathbf{x}_1^*}^{(2)} > 1$ is required for $\mathbf{x}_3^* \geq 0$. Thus, the single-strain endemic steady states are biologically relevant only if the corresponding strain-specific reproduction number exceeds one. The values $\mathscr{R}_{\mathbf{x}_2^*}$ and $\mathscr{R}_{\mathbf{x}_3^*}$ are invasion numbers: they indicate when the introduction of one strain can destabilize the steady state endemic in the other strain and invade the population \cite{martcheva2007,martcheva2008,johnston2025effect}. Coexistence of both strains is possible if $\mathscr{R}_{\mathbf{x}_1^*} > 1$, $\mathscr{R}_{\mathbf{x}_2^*} > 1$, and $\mathscr{R}_{\mathbf{x}_3^*} > 1$, i.e., either strain can invade the other.
\end{example}

\begin{example}
\label{ex:vectorborne}
Consider the vector-borne disease spread model introduced in \cite{feng1997competitive} and studied further in \cite{VANDENDRIESSCHE2002}.
\[\left\{ \; \;
\begin{aligned}
S' & = \alpha_s-\delta_s S+\gamma_s I - \beta_s S V \\
I' & = \beta_s S V - (\delta_s + \gamma_s)I\\
M' & = \alpha_m-\delta_m M - \beta_m M I \\
V' & = \beta_m M I - \delta_m V.
\end{aligned}
\right.\]
This can be represented as the following chemical interaction network:
\begin{equation}
\label{vector}
\begin{tikzcd}
I \arrow[rr,"\delta_s"] & & \emptyset \arrow[rr,yshift=0.5ex,"\alpha_s"] & & S \arrow[ll,yshift=-0.5ex,"\delta_s"] & S + V \arrow[rr,"\beta_s"] & & I + V & & M + I \arrow[ll,"\beta_m"'] \\[-0.1in]
V \arrow[rr,"\delta_m"] & & \emptyset \arrow[rr,yshift=0.5ex,"\alpha_m"] & & M \arrow[ll,yshift=-0.5ex,"\delta_m"] & I \arrow[rr,"\gamma_s"] & & S & & \\[-0.1in]
\end{tikzcd}
\end{equation}

The network \eqref{vector} has only the siphon $\mathcal{X} = \{ V, I \}$, which is critical. There are no conservation laws. The corresponding $\mathcal{X}$-free boundary steady state is $\displaystyle \mathbf{x}^* = \left(\frac{\alpha_s}{\delta_s}, 0, \frac{\alpha_m}{\delta_m}, 0 \right)$. For heuristic \textbf{(H1)}, we notice that $S + V \to I + V$ has $V$ as both a reactant and product, and $M+I \to I+V$ has $I$ as both a reactant and product, but neither species is increased. Consequently, \textbf{(H1)} does not apply. However, we can use heuristic \textbf{(H2)} since both $S+V \to I +V$ and and $M+I \to I + V$ have $I$ and $V$ as products. 
Choosing the term $\beta_m M I$ in the $V'$ equation to include in $\mathcal{F}$ gives:
\[
F =\begin{bmatrix}
0 & 0 \\
\displaystyle \frac{\alpha_m \beta_m}{\delta_m} & 0
\end{bmatrix}, \quad
V = \begin{bmatrix}
\delta_s + \gamma_s & \displaystyle - \frac{\alpha_s \beta_s}{\delta_s} \\
0  & \delta_m
\end{bmatrix}, \quad
FV^{-1} = \begin{bmatrix}
0 & 0 \\
\displaystyle \frac{\alpha_m \beta_m}{\delta_m (\delta_s + \gamma_s)} & \displaystyle \frac{\alpha_m \beta_m \beta_s \alpha_s}{\delta_m^2 \delta_s (\delta_s + \gamma_s)}
\end{bmatrix}.
\]
$V$ is in the block triangular form \eqref{blocktriangular} with $A_1 = \delta_s + \gamma_s >0$ and $A_2 = \delta_m > 0$. It follows that 
\begin{equation}
\label{bpn2}
\mathscr{R}_{\mathbf{x}^*} = \rho(FV^{-1}) = \frac{\beta_s \alpha_s \beta_m \alpha_m}{\delta_s \delta_m^2 (\delta_s + \gamma_s)}
\end{equation}
so that $\mathbf{x}^*$ is locally asymptotically stable if $\mathscr{R}_{\mathbf{x}^*}<1$ and unstable if $\mathscr{R}_{\mathbf{x}^*}>1$.

It is worth noting that there may be multiple choices of $\mathcal{F}$ which satisfy Theorem \ref{thm:main}, and they may produce different values for $\mathscr{R}_{\mathbf{x}^*}$. For example, consider using both the term $\beta_s S V$ in the $I'$ equation and the term $\beta_m M I$ in the $V'$ equation to include in $\mathcal{F}$. This choice gives
\[
F =\begin{bmatrix}
0 & \displaystyle  \frac{\alpha_s\beta_s}{\delta_s} \\
\displaystyle \frac{\alpha_m\beta_m}{\delta_m}  & 0
\end{bmatrix}, \quad
V = \begin{bmatrix}
\delta_s + \gamma_s & 0 \\
0 & \delta_m
\end{bmatrix}, \quad
FV^{-1} = \begin{bmatrix}
0 & \displaystyle \frac{\beta_s \alpha_s}{\delta_s \delta_m} \\
\displaystyle \frac{\alpha_m \beta_m}{\delta_m (\delta_s + \gamma_s)} & 0
\end{bmatrix}
\]
so that
\begin{equation}
\label{bpn3}
\mathscr{R}_{\mathbf{x}^*} = \rho(FV^{-1}) = \sqrt{\frac{\beta_s \alpha_s \beta_m \alpha_m}{\delta_s \delta_m^2 (\delta_s + \gamma_s)}}.
\end{equation}
Note that, although the choice of $\mathcal{F}$ produces different boundary reproduction numbers \eqref{bpn2} and \eqref{bpn3}, they both yield the same bifurcation value $\mathscr{R}_{\mathbf{x}^*} = 1$ and are therefore consistent with Theorem \ref{thm:main}.
\end{example}

\subsection{Biochemical Reaction Networks}
\label{sec:biochemicalexamples}

In this section, we present several examples with boundary steady states drawn from the mathematical biochemistry literature. A key theme is an increased reliance on the disassociation heuristic \textbf{(H2)} rather than the autocatalysis heuristic \textbf{(H1)} in the selection of $\mathcal{F}$. Conservation laws factor significantly in the analysis.

\begin{example}
\label{ex:IDHKP}
Consider the following model of the IDHKP-IDH system presented in \cite{shinar2010structural} and previously analyzed in \cite{Avram2025StabilityIR}:
\begin{equation}
\label{IDHKP1}
\begin{tikzcd}
E + I_p \arrow[r,yshift=0.5ex,"k_1"] & EI_p \arrow[l,yshift=-0.5ex,"k_2"] \arrow[r,"k_3"] & E + I \\[-0.1in]
EI_p + I \arrow[r,yshift=0.5ex,"k_4"] & EI_pI \arrow[l,yshift=-0.5ex,"k_5"] \arrow[r,"k_6"] & EI_p + I_p.
\end{tikzcd}
\end{equation}
The species set is $\mathcal{S} = \{ E, I_p, EI_p, I, EI_pI \}$, and $\mathcal{X} = \{I_p, EI_p, EI_pI \}$ is a critical siphon. This motivates the relabeling $X_1 = I_p$, $X_2 = EI_p$, $X_3 = EI_pI$, $Y_1 = E$, and $Y_2 = I$, so that we have the siphon $\mathcal{X} = \{ X_1, X_2, X_3 \}$ and antisiphon $\mathcal{Y} = \{ Y_1, Y_2 \}$. We rewrite \eqref{IDHKP1} as
\begin{equation}
\label{IDHKP2}
\begin{tikzcd}
X_1 + Y_1 \arrow[r,yshift=0.5ex,"k_1"] & X_2 \arrow[l,yshift=-0.5ex,"k_2"] \arrow[r,"k_3"] & Y_1 + Y_2 \\[-0.1in]
X_2 + Y_2 \arrow[r,yshift=0.5ex,"k_4"] & X_3 \arrow[l,yshift=-0.5ex,"k_5"] \arrow[r,"k_6"] & X_1 + X_2.
\end{tikzcd}
\end{equation}
The mass-action system \eqref{de2} corresponding to \eqref{IDHKP2} is
\begin{equation}
\label{IDHKP-de}
\left\{ \; \;
\begin{aligned}
x_1' & = -k_1x_1y_1 + k_2x_2 + k_6x_3\\
x_2' & = k_1x_1y_1 - (k_2 + k_3)x_2 - k_4x_2y_2 + (k_5 + k_6)x_3\\
x_3' & = k_4x_2y_2 - (k_5 + k_6)x_3
\end{aligned}
\right. \; \; \left\{ \; \;
\begin{aligned}
y_1' & = -k_1x_1y_1 + (k_2 + k_3)x_2\\
y_2' & = -k_4x_2y_2 + k_3x_2 + k_5x_3
\end{aligned}
\right.
\end{equation}
The network \eqref{IDHKP2} admits the conservation laws $X_2 + X_3 + Y_1 = E_{tot}$ and $X_1 + X_2 + 2X_3 + Y_2 = I_{tot}$, with positive conservation constants $E_{tot}, I_{tot} > 0$, corresponding to the conservation equation \eqref{conservation2} with
\[
W_{\mathcal{X}} = \begin{bmatrix} 0 & 1 & 1 \\ 1 & 1 & 2 \end{bmatrix}, \quad 
W_{\mathcal{Y}} = \begin{bmatrix} 1 & 0 \\ 0 & 1 \end{bmatrix}, \quad 
\text{and} \quad 
\mathbf{\Lambda} = \begin{bmatrix} E_{tot} \\ I_{tot} \end{bmatrix}.
\]
We have $|\mathcal{Y}| = n-s = 2$ and $W_{\mathcal{Y}}$ is invertible. Consequently, every $\mathcal{X}$-free boundary steady state has the form $\mathbf{x}^* = W_{\mathcal{Y}}^{-1}\mathbf{\Lambda} = (0,0,0,E_{tot},I_{tot})$.

We now select $\mathcal{F}$. Since there are no autocatalytic reactions, heuristic \textbf{(H1)} does not apply. However, the disassociation reaction $X_3 \to X_1 + X_2$ suggests, via heuristic \textbf{(H2)}, including the term $k_6x_3$ in the $x_1'$ or $x_2'$ equation of \eqref{IDHKP-de} in $\mathcal{F}$. Choosing the $k_6x_3$ term in the $x_1'$ equation gives:
\[
\begin{aligned}
\mathcal{F} & = \langle k_6x_3, 0, 0 \rangle,\\
\mathcal{V} & = \langle k_1x_1y_1 - k_2x_2, -k_1x_1y_1 + (k_2 + k_3)x_2 + k_4x_2y_2 - (k_5 + k_6)x_3, (k_5 + k_6)x_3 - k_4x_2y_2 \rangle.
\end{aligned}
\]
We take the Jacobians of $\mathcal{F}$ and $\mathcal{V}$ with respect to the siphon elements $\mathcal{X} = \{ X_1, X_2, X_3 \}$ and evaluate them at the boundary steady state $\mathbf{x}^*$ to obtain
\[
F = \begin{bmatrix}
0 & 0 & k_6 \\
0 & 0 & 0 \\
0 & 0 & 0 
\end{bmatrix}
\quad \text{and} \quad
V = \begin{bmatrix}
k_1 E_{tot} & -k_2 & 0 \\
-k_1 E_{tot} & k_4 I_{tot} + k_2 + k_3 & -k_5 - k_6 \\
0 & -k_4 I_{tot} & k_5 + k_6
\end{bmatrix}.
\]
Note that $V$ is irreducible (i.e., $V = A_1$ in \eqref{blocktriangular}) and $\mathbbm{1}^T V = (0, k_3, 0) \geq 0$. Thus, the conditions of Theorem \ref{thm:2} are satisfied, implying $V^{-1} \geq 0$. We are therefore justified in computing the boundary reproduction number $\mathscr{R}_{\mathbf{x}^*} = \rho(FV^{-1})$ by Theorem \ref{thm:main}.

We compute:
\[
V^{-1} = \left[ \begin{array}{ccc} 
\displaystyle \frac{k_2 + k_3}{k_1 k_3E_{tot}} & \displaystyle \frac{k_2}{k_1 k_3E_{tot}} & \displaystyle \frac{k_2}{k_1 k_3E_{tot}} \\[1em]
\displaystyle \frac{1}{k_3} & \displaystyle \frac{1}{k_3} & \displaystyle \frac{1}{k_3} \\[1em]
\displaystyle \frac{k_4I_{tot}}{k_3(k_5 + k_6)} & \displaystyle \frac{k_4I_{tot}}{k_3(k_5 + k_6)} & \displaystyle \frac{k_4I_{tot} + k_3}{k_3(k_5 + k_6)} 
\end{array} \right]
\]
so that
\[
FV^{-1} = 
\left[ \begin{array}{ccc}
\displaystyle \frac{k_4 k_6I_{tot}}{k_3(k_5 + k_6)} & \displaystyle \frac{k_4 k_6I_{tot}}{k_3(k_5 + k_6)} & \displaystyle \frac{k_6(k_4I_{tot} + k_3)}{k_3(k_5 + k_6)} \\
0 & 0 & 0 \\
0 & 0 & 0
\end{array} \right].
\]
The boundary reproduction number of $\mathbf{x}^*$ for this choice of splitting $\mathcal{F}$ and $\mathcal{V}$ is
\[
\mathscr{R}_{\mathbf{x}^*} = \rho(FV^{-1}) = \displaystyle{\frac{k_4k_6 I_{tot}}{k_3 (k_5 + k_6)}}.
\]
It follows that $\mathbf{x}^*$ is locally asymptotically stable for every stoichiometric compatibility class satisfying $\displaystyle I_{tot} < \frac{k_3(k_5 + k_6)}{k_4k_6}$ and unstable for every compatibility class satisfying $\displaystyle I_{tot} > \frac{k_3(k_5 + k_6)}{k_4k_6}$.
\end{example}

\begin{example}
\label{ex:EnvZOmpR19}
Consider the full EnvZ-OmpR network introduced in the Supplemental Material of \cite{shinar2010structural} and previously analyzed in \cite{Avram2025StabilityIR}:
\begin{equation}
\label{EnvZOmpR}
      \begin{tikzcd}
X_d \arrow[r,yshift=+0.5ex,"k_1"] & X \arrow[l,yshift=-0.5ex,"k_2"] \arrow[r,yshift=+0.5ex,"k_3"]& X_t \arrow[l,yshift=-0.5ex,"k_4"] \arrow[r,"k_5"]& X_p\\[-0.1in]
X_p + Y \arrow[r,yshift=+0.5ex,"k_6"] & X_pY \arrow[l,yshift=-0.5ex,"k_7"] \arrow[r,yshift=+0.5ex,"k_8"]& X + Y_p & \\[-0.1in]
X_d + Y_p \arrow[r,yshift=+0.5ex,"k_9"] & X_dY_p \arrow[l,yshift=-0.5ex,"k_{10}"] \arrow[r,yshift=+0.5ex,"k_{11}"]& X_d + Y &  \\[-0.1in]
X_t + Y_p \arrow[r,yshift=+0.5ex,"k_{12}"] & X_tY_p \arrow[l,yshift=-0.5ex,"k_{13}"] \arrow[r,yshift=+0.5ex,"k_{14}"]& X_t + Y &
      \end{tikzcd}
\end{equation}
We have the siphon $\mathcal{X} = \{ X_d, X, X_t, Y, X_pY, X_dY_p, X_tY_p \}$, which is critical. This suggests the relabeling $X_1 = X_d, X_2 = X, X_3 = X_t, X_4 = Y, X_5 = X_pY, X_6 = X_dY_p, X_7 = X_tY_p, Y_1 = X_p$, and $Y_2 = Y_p$ in \eqref{EnvZOmpR}, giving:
\begin{equation}
\label{EnvZOmpR2}
      \begin{tikzcd}
X_1 \arrow[r,yshift=+0.5ex,"k_1"] & X_2 \arrow[l,yshift=-0.5ex,"k_2"] \arrow[r,yshift=+0.5ex,"k_3"]& X_3 \arrow[l,yshift=-0.5ex,"k_4"] \arrow[r,"k_5"]& Y_1\\[-0.1in]
X_4 + Y_1 \arrow[r,yshift=+0.5ex,"k_6"] & X_5 \arrow[l,yshift=-0.5ex,"k_7"] \arrow[r,yshift=+0.5ex,"k_8"]& X_2 + Y_2 & \\[-0.1in]
X_1 + Y_2 \arrow[r,yshift=+0.5ex,"k_9"] & X_6 \arrow[l,yshift=-0.5ex,"k_{10}"] \arrow[r,yshift=+0.5ex,"k_{11}"]& X_1 + X_4 &  \\[-0.1in]
X_3 + Y_2 \arrow[r,yshift=+0.5ex,"k_{12}"] & X_7 \arrow[l,yshift=-0.5ex,"k_{13}"] \arrow[r,yshift=+0.5ex,"k_{14}"]& X_3 + X_4 &
      \end{tikzcd}
\end{equation}
The associated mass-action system is:
\small
\begin{equation}
\label{EnvZOmpR-de}
\left\{ \; \;
\begin{aligned}
x_1' & =  -k_9 x_1 y_2 - k_1 x_1 + k_2 x_2 + (k_{10} + k_{11}) x_6 \\
x_2' & =  k_1 x_1 - (k_2 + k_3) x_2 + k_4 x_3 + k_8 x_5\\
x_3' & =  k_3 x_2 - (k_4 + k_5) x_3 - k_{12} x_3 y_2 + (k_{13} + k_{14}) x_7 \\
x_4' & =  -k_6 x_4 y_1 + k_5 x_3 + k_7 x_5\\
x_5' & =  k_6 x_4 y_1 - (k_7 + k_8) x_5\\
x_6' & =  k_9 x_1 y_2 - (k_{10} + k_{11}) x_6\\
x_7' & =  k_{12} x_3 y_2 - (k_{13} + k_{14}) x_7
\end{aligned}
\right. \; \; \left\{ \; \; 
\begin{aligned}
y_1' & =  -k_6 x_4 y_1 + k_{11} x_6 + k_{14} x_7 + k_7 x_5\\
y_2' & =  -k_{12} x_3 y_2 - k_9 x_1 y_2 + k_{10} x_6 + k_{13} x_7 + k_8 x_5
\end{aligned} \right.
\end{equation}
\normalsize
The network \eqref{EnvZOmpR2} has the conservation laws $X_1+X_2+X_3+X_5+X_6+X_7+Y_1 = X_{tot}$ and $X_4+X_5+X_6+X_7+Y_2 = Y_{tot}$, with conservation constants $X_{tot} > 0$ and $Y_{tot} > 0$, corresponding to the conservation equation \eqref{conservation2}, where
\[W_{\mathcal{X}} = \begin{bmatrix} 1 & 1 & 1 & 0 & 1 & 1 & 1 \\ 0 & 0 & 0 & 1 & 1 & 1 & 1 \end{bmatrix}, \quad W_{\mathcal{Y}} = \begin{bmatrix} 1 & 0 \\ 0 & 1 \end{bmatrix}, \quad \text{and} \quad \mathbf{\Lambda} = \begin{bmatrix} X_{tot} \\ Y_{tot} \end{bmatrix}.\]
We have $|\mathcal{Y}| = n-s = 2$ and $W_{\mathcal{Y}}$ is invertible, so every $\mathcal{X}$-free boundary steady state has the form $\mathbf{x}^* = W_{\mathcal{Y}}^{-1} \mathbf{\Lambda} = (0,0,0,0,0,0,0,X_{tot},Y_{tot})$.

We now turn to selecting the positive terms to include in $\mathcal{F}$. There are no autocatalytic reactions, so heuristic \textbf{(H1)} does not apply. However, there are two disassociation reactions, $X_6 \stackrel{k_{11}}{\longrightarrow} X_1 + X_4$ and $X_7 \stackrel{k_{14}}{\longrightarrow} X_3 + X_4$, so heuristic \textbf{(H2)} applies. After some trial and error, we select the $k_{11}x_6$ and $k_{14}x_7$ terms in the $x_4'$ equation of \eqref{EnvZOmpR-de}. This gives:
\[
\mathcal{F} = \langle 0,0,0, k_{11} x_6 + k_{14} x_7, 0, 0, 0 \rangle,
\]
and $\mathcal{V} = \langle \mathcal{V}_1, \mathcal{V}_2, \mathcal{V}_3, \mathcal{V}_4, \mathcal{V}_5, \mathcal{V}_6, \mathcal{V}_7 \rangle$ with entries
\[
\begin{aligned}
\mathcal{V}_1 & = k_9 x_1 y_2 + k_1 x_1 - k_2 x_2 - (k_{10} + k_{11}) x_6,\\
\mathcal{V}_2 & = -k_1 x_1 + (k_2 + k_3) x_2 - k_4 x_3 - k_8 x_5,\\
\mathcal{V}_3 & = -k_3 x_2 + (k_4 + k_5) x_3 + k_{12} x_3 y_2 - (k_{13} + k_{14}) x_7,\\
\mathcal{V}_4 & = k_6 x_4 y_1 - k_7 x_5,\\
\mathcal{V}_5 & = -k_6 x_4 y_1 + (k_7 + k_8) x_5,\\
\mathcal{V}_6 & = -k_9 x_1 y_2 + (k_{10} + k_{11}) x_6,\\
\mathcal{V}_7 & = -k_{12} x_3 y_2 + (k_{13} + k_{14}) x_7.\\
\end{aligned}
\]
We take the Jacobians of $\mathcal{F}$ and $\mathcal{V}$ with respect to the siphon elements $\mathcal{X}$ and evaluate at $\mathbf{x}^*$ to get
\[F = \left[ \begin{array}{ccccccc}
\displaystyle 0 & \displaystyle 0 & \displaystyle 0 & \displaystyle 0 & \displaystyle 0 & \displaystyle 0 & \displaystyle 0 \\
\displaystyle 0 & \displaystyle 0 & \displaystyle 0 & \displaystyle 0 & \displaystyle 0 & \displaystyle 0 & \displaystyle 0 \\
\displaystyle 0 & \displaystyle 0 & \displaystyle 0 & \displaystyle 0 & \displaystyle 0 & \displaystyle 0 & \displaystyle 0 \\
\displaystyle 0 & \displaystyle 0 & \displaystyle 0 & \displaystyle 0 & \displaystyle 0 & \displaystyle k_{11} & \displaystyle k_{14} \\
\displaystyle 0 & \displaystyle 0 & \displaystyle 0 & \displaystyle 0 & \displaystyle 0 & \displaystyle 0 & \displaystyle 0 \\
\displaystyle 0 & \displaystyle 0 & \displaystyle 0 & \displaystyle 0 & \displaystyle 0 & \displaystyle 0 & \displaystyle 0 \\
\displaystyle 0 & \displaystyle 0 & \displaystyle 0 & \displaystyle 0 & \displaystyle 0 & \displaystyle 0 & \displaystyle 0
\end{array}\right]\] 
and
\begin{equation}
\label{V10}
V = \left[ \begin{array}{ccccccc}
\displaystyle k_9 Y_{tot} + k_1 & \displaystyle -k_2 & \displaystyle 0 & \displaystyle 0 & \displaystyle 0 & \displaystyle -k_{10} - k_{11} & \displaystyle 0 \\ 
\displaystyle -k_1 & \displaystyle k_2 + k_3 & \displaystyle -k_4 & \displaystyle 0 & \displaystyle -k_8 & \displaystyle 0 & \displaystyle 0 \\ 
\displaystyle 0 & \displaystyle -k_3 & \displaystyle k_{12} Y_{tot} + k_4 + k_5 & \displaystyle 0 & \displaystyle 0 & \displaystyle 0 & \displaystyle -k_{13} - k_{14} \\ 
\displaystyle 0 & \displaystyle 0 & \displaystyle 0 & \displaystyle k_6 X_{tot} & \displaystyle -k_7 & \displaystyle 0 & \displaystyle 0 \\ 
\displaystyle 0 & \displaystyle 0 & \displaystyle 0 & \displaystyle -k_6 X_{tot} & \displaystyle k_7 + k_8 & \displaystyle 0 & \displaystyle 0 \\ 
\displaystyle -k_9 Y_{tot} & \displaystyle 0 & \displaystyle 0 & \displaystyle 0 & \displaystyle 0 & \displaystyle k_{10} + k_{11} & \displaystyle 0 \\ 
\displaystyle 0 & \displaystyle 0 & \displaystyle -k_{12} Y_{tot} & \displaystyle 0 & \displaystyle 0 & \displaystyle 0 & \displaystyle k_{13} + k_{14}
\end{array} \right]
\end{equation}
Permuting the rows and columns to the order $\{X_1, X_2, X_3, X_6, X_7, X_4, X_5\}$ gives the following block triangular form:
\[PVP^T = \left[\begin{array}{ccccccc}
\displaystyle k_9Y_{\text{tot}}  + k_1 & \displaystyle -k_2 & \displaystyle 0 & \displaystyle -k_{10} - k_{11} & \displaystyle 0 & \displaystyle 0 & \displaystyle 0 \\
\displaystyle -k_1 & \displaystyle k_2 + k_3 & \displaystyle -k_4 & \displaystyle 0 & \displaystyle 0 & \displaystyle 0 & \displaystyle -k_8 \\
\displaystyle 0 & \displaystyle -k_3 & \displaystyle k_{12}Y_{\text{tot}}  + k_4 + k_5 & \displaystyle 0 & \displaystyle -k_{13} - k_{14} & \displaystyle 0 & \displaystyle 0 \\
\displaystyle -k_9 Y_{\text{tot}} & \displaystyle 0 & \displaystyle 0 & \displaystyle k_{10} + k_{11} & \displaystyle 0 & \displaystyle 0 & \displaystyle 0 \\
\displaystyle 0 & \displaystyle 0 & \displaystyle -k_{12} Y_{\text{tot}} & \displaystyle 0 & \displaystyle k_{13} + k_{14} & \displaystyle 0 & \displaystyle 0 \\
\displaystyle 0 & \displaystyle 0 & \displaystyle 0 & \displaystyle 0 & \displaystyle 0 & \displaystyle k_6 X_{\text{tot}} & \displaystyle -k_7 \\
\displaystyle 0 & \displaystyle 0 & \displaystyle 0 & \displaystyle 0 & \displaystyle 0 & \displaystyle -k_6 X_{\text{tot}} & \displaystyle k_7 + k_8
\end{array}\right]\]
The blocks are
\[A_1 = \left[\begin{array}{ccccc}
\displaystyle k_9Y_{\text{tot}}  + k_1 & \displaystyle -k_2 & \displaystyle 0 & \displaystyle -k_{10} - k_{11} & \displaystyle 0 \\
\displaystyle -k_1 & \displaystyle k_2 + k_3 & \displaystyle -k_4 & \displaystyle 0 & \displaystyle 0 \\
\displaystyle 0 & \displaystyle -k_3 & \displaystyle k_{12}Y_{\text{tot}}  + k_4 + k_5 & \displaystyle 0 & \displaystyle -k_{13} - k_{14} \\
\displaystyle -k_9 Y_{\text{tot}} & \displaystyle 0 & \displaystyle 0 & \displaystyle k_{10} + k_{11} & \displaystyle 0 \\
\displaystyle 0 & \displaystyle 0 & \displaystyle -k_{12} Y_{\text{tot}} & \displaystyle 0 & \displaystyle k_{13} + k_{14}
\end{array}\right]\]
and
\[A_2 = \left[\begin{array}{cc}
\displaystyle k_6 X_{\text{tot}} & \displaystyle -k_7 \\
\displaystyle -k_6 X_{\text{tot}} & \displaystyle k_7 + k_8
\end{array}\right]\] 
We have that $\mathbbm{1}^T A_1 = (0,0,k_5,0,0) \geq 0$ and $\mathbbm{1}^T A_2 = (0,k_8) \geq 0$ so that Theorem \ref{thm:2} is satisfied and therefore $V^{-1} \geq 0$. The matrices $V^{-1}$ and $FV^{-1}$ are too lengthy to display. The boundary reproduction number of $\mathbf{x}^*$ for the splitting $\mathcal{F}$ and $\mathcal{V}$ is
\[\mathscr{R}_{\mathbf{x}^*} = Y_{tot} \left( \frac{ ( \left( k_2 k_9 (k_4 + k_5) + k_1 k_3 k_{12} \right) k_{14} + k_2 k_9 k_{13} (k_4 + k_5) ) k_{11} + k_1 k_3 k_{10} k_{12} k_{14}}{k_1 k_3 k_5 (k_{10} + k_{11}) (k_{13} + k_{14})} \right)\]
It follows that $\mathbf{x}^*$ is locally asymptotically stable for every stoichiometric compatibility class satisfying
\[Y_{tot} < \frac{ k_1 k_3 k_5(k_{10} + k_{11}) (k_{13} + k_{14})}{((k_2 k_9 (k_4 + k_5) + k_1 k_3 k_{12}) k_{14} + k_2 k_9 k_{13} (k_4 + k_5)) k_{11} + k_1 k_3 k_{10} k_{12} k_{14}}\]
and unstable for every stoichiometric compatibility class satisfying 
\[Y_{tot} > \frac{ k_1 k_3 k_5(k_{10} + k_{11}) (k_{13} + k_{14})}{((k_2 k_9 (k_4 + k_5) + k_1 k_3 k_{12}) k_{14} + k_2 k_9 k_{13} (k_4 + k_5)) k_{11} + k_1 k_3 k_{10} k_{12} k_{14}}.\]
    
\end{example}

\begin{example}
\label{ex:EnvZOmpR20}
Reconsider the EnvZ-OmpR example from Example \ref{ex:EnvZOmpR19}. We now take the approach of constructing the $\mathcal{X}$-reduced interaction network to identify the appropriate vector $\mathcal{F}$ for constructing the boundary reproduction number $\mathscr{R}_{\mathbf{x}^*}$. The $\mathcal{X}$-reduced interaction network corresponding to \eqref{EnvZOmpR2} is:
\begin{equation}
\label{EnvZOmpR-reduced}
\begin{tikzcd}
X_1 \arrow[rr,yshift=0.5ex,"k_1"] \arrow[dd,xshift=-0.5ex,"k_9"'] & & X_2 \arrow[ll,yshift=-0.5ex,"k_2"] \arrow[rr,yshift=0.5ex,"k_3"] & & X_3 \arrow[ll,yshift=-0.5ex,"k_4"] \arrow[rr,"k_5"] \arrow[dd,xshift=-0.5ex,"k_{12}"'] & & \emptyset \\
\\
X_6 \arrow[uu,xshift=0.5ex,"k_{10}+k_{11}"'] \arrow[ddrr,red,dashed,bend right=15,"k_{11}"'] & & X_5 \arrow[uu,"k_8"'] \arrow[dd,xshift=-0.5ex,"k_7"'] & & X_7 \arrow[uu,xshift=0.5ex,"k_{13}+k_{14}"'] \arrow[ddll,red,dashed,bend left=15,"k_{14}"] & & \\
\\
& & X_4 \arrow[uu,xshift=0.5ex,"k_{6}"'] & & & & 
\end{tikzcd}
\end{equation}
Note that the reactions $k_{11}$ and $k_{14}$—corresponding to the reaction rates $k_{11}x_6$ and $k_{14}x_7$, respectively—appear twice in \eqref{EnvZOmpR-reduced}. 

To apply Theorem \ref{thm:3}, we select the splitting $G_{\mathcal{X}}^{\mathcal{F}}$ (dashed red) and $G_{\mathcal{X}}^{\mathcal{V}}$ (solid black) so that the edges labeled $k_{11}$ and $k_{14}$ do not lie in the same strongly connected component of $G_{\mathcal{X}}^{\mathcal{V}}$. This corresponds to including the $k_{11}x_6$ and $k_{14}x_7$ terms in the $x_4'$ equation of \eqref{EnvZOmpR-de} in $\mathcal{F}$, as in Example \ref{ex:EnvZOmpR19}. This produces the strongly connected components $\{X_4, X_5 \}$, $\{X_1, X_2, X_3, X_6, X_7\}$, and $\{\emptyset\}$, which correspond to the irreducible blocks in \eqref{V10}. Since one instance of $k_{11}$ and $k_{14}$ appears in the strongly connected component $\{X_1, X_2, X_3, X_6, X_7\}$, the requirements of Theorem \ref{thm:3} are satisfied.

This example shows that not every splitting of $\mathcal{F}$ and $\mathcal{V}$ satisfying heuristic \textbf{(H2)} produces $F$ and $V$ matrices that satisfy the requirements of Theorem \ref{thm:3}. For instance, including only the $k_{11}x_6$ term in the $x_4'$ equation of \eqref{EnvZOmpR-de} in $\mathcal{F}$ leads to two $k_{14}$ edges in the same strongly connected component of $G_{\mathcal{X}}^{\mathcal{V}}$. Similarly, including only the $k_{14}x_7$ term leads to two $k_{11}$ edges in the same strongly connected component. In both cases, $V^{-1}$ is not a nonnegative matrix for this choice of $\mathcal{F}$ and $\mathcal{V}$. Consequently, Theorem \ref{thm:main} does not apply, and $\rho(FV^{-1}) = 1$ is not guaranteed to be the correct threshold for stability of the boundary steady state.
\end{example}

\begin{example}
\label{ex:shiu}
Consider the following chemical reaction network introduced in \cite{chavez2003observer} and further studied in \cite{anderson2008global,shiu2010siphons}:
\begin{equation}
\label{shiu1}
\begin{tikzcd}
2A + C \arrow[rr,yshift=0.5ex,"k_1"] \arrow[dd,xshift=0.5ex,"k_8"] & & A+D \arrow[ll,yshift=-0.5ex,"k_2"] \arrow[dd,xshift=0.5ex,"k_3"] \\
& & \\
B+C \arrow[uu,xshift=-0.5ex,"k_7"] \arrow[rr,yshift=0.5ex,"k_6"] & & E \arrow[ll,yshift=-0.5ex,"k_5"] \arrow[uu,xshift=-0.5ex,"k_4"]
\end{tikzcd}
\end{equation}
This yields the following system of differential equations:
\begin{equation}
\label{shiu-de}
\left\{ \; \;
\begin{aligned}
A' &= -(k_1 + 2k_8) A^2 C + (k_2 - k_3) A D + k_4 E + 2k_7 B C \\
B' &= k_5 E - (k_6 + k_7) B C + k_8 A^2 C \\
C' &= -k_1 A^2 C - k_6 B C + k_2 A D + k_5 E \\
D' &= k_1 A^2 C + k_4 E - (k_2 + k_3) A D \\
E' &= k_3 A D + k_6 B C - (k_4 + k_5) E 
\end{aligned}
\right.
\end{equation}
The system \eqref{shiu-de} was shown to be universally persistent—that is, no solution may approach any portion of the boundary for any choice of rate constants—in \cite{anderson2008global}, using methods related to complex-balanced networks. We now consider the boundary reproduction number approach.

The network \eqref{shiu1} has three siphons, which we denote $\mathcal{X}_1 = \{ C, D, E \}$, $\mathcal{X}_2 = \{ A, B, E \}$, and $\mathcal{X}_3 = \{ A, C, E \}$, and two linearly independent conservation laws: $C+D+E = \Lambda_1$ and $A + 2B + D + 2E = \Lambda_2$. We apply the boundary reproduction number approach to the siphons individually. \\

\noindent \textbf{Siphon $\mathcal{X}_1$ (structural persistence)}: Since $A$, $B$, and $C$ are common to $\mathcal{X}_1$ and the first conservation law, $\mathcal{X}_1$ is a noncritical siphon and therefore structurally persistent \cite{angeli2007petri}.\\

\noindent \textbf{Siphon $\mathcal{X}_2$ (universally unstable)}: $\mathcal{X}_2 = \{ A, B, E \}$ is a critical siphon with antisiphon $\mathcal{Y}_2 = \{ C, D \}$. The conservation equation \eqref{conservation2} gives
\[W_{\mathcal{X}_2} = \begin{bmatrix} 0 & 0 & 1 \\ 1 & 2 & 2 \end{bmatrix}, \quad W_{\mathcal{Y}_2} = \begin{bmatrix} 1 & 1 \\ 0 & 1 \end{bmatrix}, \quad W_{\mathcal{Y}_2}^{-1} = \begin{bmatrix} 1 & -1 \\ 0 & 1 \end{bmatrix}, \quad \text{and} \quad \mathbf{\Lambda} = \begin{bmatrix} \Lambda_1 \\ \Lambda_2 \end{bmatrix}.\]
We have $|\mathcal{Y}_2| = n - s = 2$ and $W_{\mathcal{Y}_2}$ is invertible. Consequently, every $\mathcal{X}_2$-free boundary steady state has the form $\mathbf{x}_2^* = W_{\mathcal{Y}_2}^{-1} \mathbf{\Lambda} = (0, 0, \Lambda_1 - \Lambda_2, \Lambda_2, 0)$. Note that $\mathbf{x}_2^* \geq 0$ if $\Lambda_1 > \Lambda_2$.

We select the positive terms $k_2 AD + 2k_7 BC + k_4 E$ in the $A'$ equation to include in $\mathcal{F}$ and exclude the rest. We now take the Jacobian of $\mathcal{F}$ and $\mathcal{V}$ with respect to the species in $\mathcal{X}_2$ and evaluate at $\mathbf{x}_2^* = (0, 0, \Lambda_1 - \Lambda_2, \Lambda_2, 0) \in E_{\mathcal{X}_2}$. This yields 
\[F = \begin{bmatrix}
k_2 \Lambda_2 & 2k_7(\Lambda_1 - \Lambda_2) & k_4 \\
0 & 0 & 0 \\
0 & 0 & 0
\end{bmatrix}, \quad V = \begin{bmatrix}
k_3 \Lambda_2 & 0 & 0 \\
0 & (k_6 + k_7)(\Lambda_1 - \Lambda_2) & -k_5 \\
- k_3 \Lambda_2 & -k_6(\Lambda_1 - \Lambda_2) & k_4 + k_5
\end{bmatrix}\]
so that
\[
V^{-1} = \begin{bmatrix}
\displaystyle \frac{1}{k_3 \Lambda_2} & 0 & 0 \\
\displaystyle \frac{k_5}{(k_4 k_6 + k_4 k_7 + k_5 k_7)(\Lambda_1 - \Lambda_2)} & 
\displaystyle \frac{k_4 + k_5}{(k_4 k_6 + k_4 k_7 + k_5 k_7)(\Lambda_1 - \Lambda_2)} & 
\displaystyle \frac{k_5}{(k_4 k_6 + k_4 k_7 + k_5 k_7)(\Lambda_1 - \Lambda_2)} \\
\displaystyle \frac{k_6 + k_7}{k_4 k_6 + k_4 k_7 + k_5 k_7} & 
\displaystyle \frac{k_6}{k_4 k_6 + k_4 k_7 + k_5 k_7} & 
\displaystyle \frac{k_6 + k_7}{k_4 k_6 + k_4 k_7 + k_5 k_7}
\end{bmatrix}
\]
and
\[
FV^{-1} = \begin{bmatrix}
\displaystyle \frac{(k_6 + k_7)(k_2 + k_3)k_4 + k_5 k_7(k_2 + 2k_3)}{k_3((k_6 + k_7)k_4 + k_5 k_7)} &
\displaystyle \frac{(k_6 + 2k_7)k_4 + 2k_5 k_7}{(k_6 + k_7)k_4 + k_5 k_7} &
\displaystyle \frac{(k_6 + k_7)k_4 + 2k_5 k_7}{(k_6 + k_7)k_4 + k_5 k_7} \\
0 & 0 & 0 \\
0 & 0 & 0
\end{bmatrix}.
\]
Provided $\Lambda_1 > \Lambda_2$, we have $F \geq 0$, $V^{-1} \geq 0$, and $V$ is a $Z$-matrix, so that Definition \ref{def:bpn} and Theorem \ref{thm:main} apply. It follows that the boundary reproduction number of $\mathbf{x}_2^*$ for this splitting is given by
\[\mathscr{R}_{\mathbf{x}_2^*} = \frac{k_4(k_2 + k_3)(k_6 + k_7) + k_5 k_7 (k_2 + 2k_3)}{k_3 (k_4 (k_6 + k_7) + k_5 k_7)}.\]
The condition $\mathscr{R}_{\mathbf{x}_2^*} > 1$ is equivalent to $k_2 k_4 (k_6 + k_7) + k_5 k_7 (k_2 + k_3) > 0$, which always holds. Theorem \ref{thm:main} therefore guarantees that every $\mathcal{X}_2$-free boundary steady state is unstable, regardless of the rate constant values or stoichiometric compatibility class.\\

\noindent \textbf{Siphon $\mathcal{X}_3$ (universally unstable)}: $\mathcal{X}_3 = \{ A, C, E \}$ is a critical siphon with antisiphon $\mathcal{Y}_3 = \{ B, D \}$. The conservation equation \eqref{conservation2} gives
\[
W_{\mathcal{X}_3} = \begin{bmatrix} 0 & 1 & 1 \\ 1 & 0 & 2 \end{bmatrix}, \quad W_{\mathcal{Y}_3} = \begin{bmatrix} 0 & 1 \\ 2 & 1 \end{bmatrix}, \quad W_{\mathcal{Y}_3}^{-1} = \frac{1}{2} \begin{bmatrix} -1 & 1 \\ 2 & 0 \end{bmatrix}, \quad \text{and} \quad \mathbf{\Lambda} = \begin{bmatrix} \Lambda_1 \\ \Lambda_2 \end{bmatrix}.
\]
We have $|\mathcal{Y}_3| = n-s = 2$ and $W_{\mathcal{Y}_3}$ is invertible. Consequently, every $\mathcal{X}_3$-free boundary steady state has the form $\mathbf{x}_3^* = W_{\mathcal{Y}_3}^{-1} \mathbf{\Lambda} = (0, \frac{1}{2}(\Lambda_2 - \Lambda_1), 0, \Lambda_1, 0)$. Note that $\mathbf{x}_3^* \geq 0$ if $\Lambda_2 > \Lambda_1$.

We again select the positive terms $k_2 AD + 2k_7 BC + k_4 E$ in the $A'$ equation to include in $\mathcal{F}$ and exclude the rest. We then take the Jacobian of $\mathcal{F}$ and $\mathcal{V}$ with respect to the species $\mathcal{X}_3$ and evaluate at an arbitrary point $\mathbf{x}_3^* = (0, \frac{1}{2}(\Lambda_2 - \Lambda_1), 0, \Lambda_1, 0) \in E_{\mathcal{X}_3}$. This yields
\[
F = \begin{bmatrix}
 k_{2}\Lambda_1 & k_7(\Lambda_2 - \Lambda_1) & k_4 \\
0 & 0 & 0 \\
0 & 0 & 0
\end{bmatrix}, \quad
V = \begin{bmatrix}
k_3 \Lambda_1 & 0 & 0 \\
- k_2 \Lambda_1 & \displaystyle \frac{k_6}{2} (\Lambda_2 - \Lambda_1) & -k_5 \\
- k_3 \Lambda_1 & \displaystyle -\frac{k_6}{2} (\Lambda_2 - \Lambda_1) & k_4 + k_5
\end{bmatrix},
\]
so that
\[
V^{-1} = \begin{bmatrix}
\displaystyle \frac{1}{k_3 \Lambda_1} & 0 & 0 \\
\displaystyle \frac{2(k_2 k_4 + k_2 k_5 + k_3 k_5)}{k_3 k_4 k_6 (\Lambda_2 - \Lambda_1)} & \displaystyle \frac{2(k_4 + k_5)}{k_4 k_6 (\Lambda_2 - \Lambda_1)} & \displaystyle \frac{2k_5}{k_4 k_6 (\Lambda_2 - \Lambda_1)} \\
\displaystyle \frac{k_2 + k_3}{k_3 k_4} & \displaystyle \frac{1}{k_4} & \displaystyle \frac{1}{k_4}
\end{bmatrix},
\]
and
\[
F V^{-1} = \begin{bmatrix}
\displaystyle \frac{2(k_2 k_4 (k_6 + k_7) + k_5 k_7 (k_2 + k_3)) + k_3 k_4 k_6}{k_3 k_4 k_6} & \displaystyle \frac{k_4 (k_6 + 2 k_7) + 2 k_5 k_7}{k_4 k_6} & \displaystyle \frac{k_4 k_6 + 2 k_5 k_7}{k_4 k_6} \\
0 & 0 & 0 \\
0 & 0 & 0
\end{bmatrix}.
\]
Provided $\Lambda_2 > \Lambda_1$, we have $F \geq 0$, $V^{-1} \geq 0$, and $V$ is a $Z$-matrix, so Definition \ref{def:bpn} and Theorem \ref{thm:main} apply. It follows that the boundary reproduction number of $\mathbf{x}_3^*$ for this splitting is
\[
\mathscr{R}_{\mathbf{x}_3^*} = \frac{2(k_2 k_4 (k_6 + k_7) + k_5 k_7 (k_2 + k_3)) + k_3 k_4 k_6}{k_3 k_4 k_6}.
\]
The condition $\mathscr{R}_{\mathbf{x}_3^*} > 1$ is equivalent to $k_2 k_4 (k_6 + k_7) + k_5 k_7 (k_2 + k_3) > 0$, which always holds. Theorem \ref{thm:main} therefore guarantees that every $\mathcal{X}_3$-free boundary steady state is unstable, regardless of rate constant values or stoichiometric compatibility class.\\

\noindent \textbf{Summary}: We have used the boundary reproduction number method to show that every boundary steady state is unstable, regardless of rate constant values and stoichiometric compatibility classes. However, the method cannot guarantee persistence with respect to the boundary, as instability does not rule out saddle-like behavior (i.e., the presence of stable and unstable manifolds). Establishing conditions under which the boundary reproduction number approach can be extended from guaranteeing local stability or instability to guaranteeing broader persistence is a topic for future work.

\end{example}

\section{Conclusions}
\label{sec:conclusions}

We have introduced the boundary reproduction number of a boundary steady state as a method for establishing the stability and instability of boundary steady states in biochemical reaction systems. Our results extend applications of the next generation matrix method (see \cite{VANDENDRIESSCHE2002,van2008further}) and formalize the work of \cite{Avram2025StabilityIR} within the language and theory of chemical reaction networks. When successful, the boundary reproduction number approach significantly reduces the computational time required to determine thresholds between stability and instability for boundary steady states.

We also presented matrix-based (Theorem \ref{thm:2}) and network-based (Theorem \ref{thm:3}) conditions under which the boundary reproduction number approach is successful. These results further illuminate how the selection of terms included in the vector $\mathcal{F}$ influences the computational complexity of computing the boundary reproduction number. We introduce heuristics for determining $\mathcal{F}$, which can be further refined by considering the network structure of the $\mathcal{X}$-reduced network $G_{\mathcal{X}}$. In many cases, we find that $\mathcal{F}$ can be chosen so that the resulting next generation matrix has rank one, in which case the boundary reproduction number can be directly read from the matrix. These results and heuristics provide insight for both biochemical reaction systems and models of infectious disease spread.

We pose the following open questions for future research:

\begin{enumerate}
\item \textbf{Extension from Instability to Persistence:} Theorem \ref{thm:main} guarantees that when $\mathscr{R}_{\mathbf{x}^*}$ exceeds one, the steady state $\mathbf{x}^*$ transitions from stable to unstable. Consequently, every neighborhood of $\mathbf{x}^*$ contains at least one trajectory that leaves the neighborhood; however, since the steady state may still exhibit saddle-like behavior, it is not guaranteed that \emph{every} trajectory leaves the neighborhood, as is required in the definition of persistence commonly studied in chemical reaction network theory. We therefore pose the question: What conditions on a chemical reaction system \eqref{de} guarantee the persistence of \eqref{de} with respect to a boundary face when $\mathscr{R}_{\mathbf{x}^*}>1$?

\item \textbf{Rank-One Next Generation Matrix:} Our examples illustrate that the choice of $\mathcal{F}$ and $\mathcal{V}$ matters, as some selections may satisfy the assumptions of Theorem \ref{thm:main}, while others may not. We have also seen that including fewer terms in $\mathcal{F}$ often simplifies computation, with the ideal scenario being that $F$, and hence the next generation matrix $FV^{-1}$, has rank one. Neither Theorem \ref{thm:main} nor our heuristic \textbf{(H1)}–\textbf{(H3)} guarantees the existence of a rank-one $F$ nor does it provide a method for identifying it short of exhaustive search. We therefore pose the question: Under what conditions does a \emph{rank-one} next generation matrix satisfying Theorem \ref{thm:main} exist, and how can $\mathcal{F}$ and $\mathcal{V}$ be selected to produce it?
\end{enumerate}

\paragraph{Funding} MDJ is supported by National Science Foundation Grant No. DMS-2213390.


\begin{thebibliography}{10}

\bibitem{anderson2008global}
D. F. Anderson, \emph{Global asymptotic stability for a class of nonlinear chemical equations}, SIAM J. Appl. Math. \textbf{68} (2008), no. 5, 1464--1476.

\bibitem{anderson2011proof}
D. F. Anderson, \emph{A proof of the global attractor conjecture in the single linkage class case}, SIAM J. Appl. Math. \textbf{71} (2011), no. 4, 1487--1508.

\bibitem{anderson2020tier}
D. F. Anderson, D. Cappelletti, J. Kim, and T. D. Nguyen, \emph{Tier structure of strongly endotactic reaction networks}, Stochastic Process. Appl. \textbf{130} (2020), no. 12, 7218--7259.

\bibitem{anderson2010dynamics}
D. F. Anderson and A. Shiu, \emph{The dynamics of weakly reversible population processes near facets}, SIAM J. Appl. Math. \textbf{70} (2010), no. 6, 1840--1858.

\bibitem{angeli2009persistence}
D. Angeli, P. De Leenheer, and E. Sontag, \emph{On persistence and cascade decompositions of chemical reaction networks}, arXiv preprint arXiv:0905.1332 (2009).

\bibitem{angeli2006structural}
D. Angeli, P. De Leenheer, and E. D. Sontag, \emph{On the structural monotonicity of chemical reaction networks}, Proceedings of the 45th IEEE Conference on Decision and Control, IEEE, 2006, pp. 7--12.

\bibitem{angeli2007petri}
D. Angeli, P. De Leenheer, and E. D. Sontag, \emph{A petri net approach to the study of persistence in chemical reaction networks}, Math. Biosci. \textbf{210} (2007), no. 2, 598--618.

\bibitem{angeli2008translation}
D. Angeli and E. D. Sontag, \emph{Translation-invariant monotone systems, and a global convergence result for enzymatic futile cycles}, Nonlinear Anal. Real World Appl. \textbf{9} (2008), no. 1, 128--140.

\bibitem{august2010solutions}
E. August and M. Barahona, \emph{Solutions of weakly reversible chemical reaction networks are bounded and persistent}, IFAC Proc. Vol. \textbf{43} (2010), no. 6, 42--47.

\bibitem{avram2023algorithmic}
F. Avram, R. Adenane, L. Basnarkov, and M. D. Johnston, \emph{Algorithmic approach for a unique definition of the next-generation matrix}, Mathematics \textbf{12} (2023), no. 1, 27.

\bibitem{Avram2025StabilityIR}
F. Avram, R. Adenane, A. D. Halanay, and M. D. Johnston, \emph{Stability in reaction network models via an extension of the next generation matrix method}, arXiv preprint arXiv:2411.11867 (2025).

\bibitem{berman1994nonnegative}
A. Berman and R. J. Plemmons, \emph{Nonnegative matrices in the mathematical sciences}, SIAM, 1994.

\bibitem{brunner2018robust}
J. D. Brunner and G. Craciun, \emph{Robust persistence and permanence of polynomial and power law dynamical systems}, SIAM J. Appl. Math. \textbf{78} (2018), no. 2, 801--825.

\bibitem{chavez2003observer}
M. Chavez, \emph{Observer design for a class of nonlinear systems, with applications to chemical and biological networks}, Ph.D. thesis, Rutgers University, New Brunswick, NJ, 2003.

\bibitem{craciun2019polynomial}
G. Craciun, \emph{Polynomial dynamical systems, reaction networks, and toric differential inclusions}, SIAM J. Appl. Algebra Geom. \textbf{3} (2019), no. 1, 87--106.

\bibitem{craciun2020endotactic}
G. Craciun and A. Deshpande, \emph{Endotactic networks and toric differential inclusions}, SIAM J. Appl. Dyn. Syst. \textbf{19} (2020), no. 3, 1798--1822.

\bibitem{craciun2013persistence}
G. Craciun, F. Nazarov, and C. Pantea, \emph{Persistence and permanence of mass-action and power-law dynamical systems}, SIAM J. Appl. Math. \textbf{73} (2013), no. 1, 305--329.

\bibitem{deng2011steady}
J. Deng, C. Jones, M. Feinberg, and A. Nachman, \emph{On the steady states of weakly reversible chemical reaction networks}, arXiv preprint arXiv:1111.2386 (2011).

\bibitem{diekmann1990definition}
O. Diekmann, J. A. P. Heesterbeek, and J. A. J. Metz, \emph{On the definition and the computation of the basic reproduction ratio $r_0$ in models for infectious diseases in heterogeneous populations}, J. Math. Biol. \textbf{28} (1990), 365--382.

\bibitem{donnell2014control}
P. Donnell, M. Banaji, A. Marginean, and C. Pantea, \emph{Control: an open source framework for the analysis of chemical reaction networks}, Bioinformatics \textbf{30} (2014), no. 11, 1633--1634.

\bibitem{ehrhardt2019sir}
M. Ehrhardt, J. Gašper, and S. Kilianová, \emph{Sir-based mathematical modeling of infectious diseases with vaccination and waning immunity}, J. Comput. Sci. \textbf{37} (2019), 101027.

\bibitem{feinberg1972chemical}
M. Feinberg, \emph{On chemical kinetics of a certain class}, Arch. Ration. Mech. Anal. \textbf{46} (1972), 1--41.

\bibitem{feinberg2019foundations}
M. Feinberg, \emph{Foundations of chemical reaction network theory}, 2019.

\bibitem{feng1997competitive}
Z. Feng and J. X. Velasco-Hernández, \emph{Competitive exclusion in a vector-host model for the dengue fever}, J. Math. Biol. \textbf{35} (1997), 523--544.

\bibitem{gopalkrishnan2014geometric}
M. Gopalkrishnan, E. Miller, and A. Shiu, \emph{A geometric approach to the global attractor conjecture}, SIAM J. Appl. Dyn. Syst. \textbf{13} (2014), no. 2, 758--797.

\bibitem{guldberg1879uber}
C. M. Guldberg and P. Waage, \emph{Über die chemische affinität}, J. Prakt. Chem. \textbf{127} (1879), 69--114.

\bibitem{horn1972necessary}
F. Horn, \emph{Necessary and sufficient conditions for complex balancing in chemical kinetics}, Arch. Ration. Mech. Anal. \textbf{49} (1972), 172--186.

\bibitem{horn1972general}
F. Horn and R. Jackson, \emph{General mass action kinetics}, Arch. Ration. Mech. Anal. \textbf{47} (1972), 81--116.

\bibitem{hurwitz1895ueber}
A. Hurwitz, \emph{Ueber die bedingungen, unter welchen eine gleichung nur wurzeln mit negativen reellen theilen besitzt}, Math. Ann. \textbf{46} (1895), no. 2, 273--284.

\bibitem{Heffernan2005}
J. M. Heffernan, R. J. Smith, and L. M. Wahl, \emph{Perspectives on the basic reproductive ratio}, J. R. Soc. Interface \textbf{2} (2005), no. 4, 281--293.

\bibitem{johnston2016computational}
M. D. Johnston, C. Pantea, and P. Donnell, \emph{A computational approach to persistence, permanence, and endotacticity of biochemical reaction systems}, J. Math. Biol. \textbf{72} (2016), 467--498.

\bibitem{johnston2025effect}
M. D. Johnston, B. Pell, J. Pemberton, and D. A. Rubel, \emph{The effect of vaccination on the competitive advantage of two strains of an infectious disease}, Bull. Math. Biol. \textbf{87} (2025), no. 2, 19.

\bibitem{johnston2011weak}
M. D. Johnston and D. Siegel, \emph{Weak dynamical nonemptiability and persistence of chemical kinetics systems}, SIAM J. Appl. Math. \textbf{71} (2011), no. 4, 1263--1279.

\bibitem{kemeny1976absorbing}
J. G. Kemeny, J. L. Snell, F. W. Gehring, and P. R. Halmos, \emph{Absorbing markov chains}, Finite Markov chains, Springer-Verlag, New York, 1976, p. 224.

\bibitem{lazebnik2022generic}
T. Lazebnik and S. Bunimovich-Mendrazitsky, \emph{Generic approach for mathematical model of multi-strain pandemics}, PloS one \textbf{17} (2022), no. 4, e0260683.

\bibitem{leenheer2007monotone}
P. De Leenheer, D. Angeli, and E. D. Sontag, \emph{Monotone chemical reaction networks}, J. Math. Chem. \textbf{41} (2007), 295--314.

\bibitem{liu2016survey}
G. Liu and K. Barkaoui, \emph{A survey of siphons in petri nets}, Inf. Sci. \textbf{363} (2016), 198--220.

\bibitem{martcheva2007}
M. Martcheva, \emph{On the mechanism of strain replacement in epidemic models with vaccination}, Mathematical Approaches for Emerging and Reemerging Infectious Diseases (C. Castillo-Chavez and J. M. Hyman, eds.), vol. 1, Springer, 2007, pp. 149--172.

\bibitem{martcheva2015}
M. Martcheva, \emph{An introduction to mathematical epidemiology}, Texts Appl. Math., vol. 61, Springer, New York, 2015.

\bibitem{martcheva2008}
M. Martcheva, B. M. Bolker, and R. D. Holt, \emph{Vaccine-induced pathogen strain replacement: what are the mechanisms?}, J. R. Soc. Interface \textbf{5} (2008), no. 18, 3--13.

\bibitem{norris1998markov}
J. R. Norris, \emph{Markov chains}, Cambridge Univ. Press, Cambridge, 1998.

\bibitem{plemmons1977m}
R. J. Plemmons, \emph{M-matrix characterizations. i—nonsingular m-matrices}, Linear Algebra Appl. \textbf{18} (1977), no. 2, 175--188.

\bibitem{rahman2012global}
S. M. A. Rahman and X. Zou, \emph{Global dynamics of a two-strain disease model with latency and saturating incidence rate}, Canad. Appl. Math. Quart. \textbf{20} (2012), no. 1, 51--73.

\bibitem{rose1984convergent}
D. J. Rose, \emph{Convergent regular splittings for singular m-matrices}, SIAM J. Algebraic Discrete Methods \textbf{5} (1984), no. 1, 133--144.

\bibitem{routh1877treatise}
E. J. Routh, \emph{A treatise on the stability of a given state of motion: particularly steady motion. Being the essay to which the adams prize was adjudged in 1877, in the University of Cambridge}, Macmillan and Company, 1877.

\bibitem{seneta2006non}
E. Seneta, \emph{Non-negative matrices and Markov chains}, Springer Sci. \& Business Media, 2006.

\bibitem{shinar2010structural}
G. Shinar and M. Feinberg, \emph{Structural sources of robustness in biochemical reaction networks}, Science \textbf{327} (2010), no. 5971, 1389--1391.

\bibitem{shiu2010siphons}
A. Shiu and B. Sturmfels, \emph{Siphons in chemical reaction networks}, Bull. Math. Biol. \textbf{72} (2010), 1448--1463.

\bibitem{van2017basic}
F. Van den Bosch and M. J. Jeger, \emph{The basic reproduction number of vector-borne plant virus epidemics}, Virus Res. \textbf{241} (2017), 196--202.

\bibitem{VANDENDRIESSCHE2002}
P. van den Driessche and J. Watmough, \emph{Reproduction numbers and sub-threshold endemic equilibria for compartmental models of disease transmission}, Math. Biosci. \textbf{180} (2002), no. 1, 29--48.

\bibitem{van2008further}
P. Van den Driessche and J. Watmough, \emph{Further notes on the basic reproduction number}, Mathematical Epidemiology (F. Brauer, P. van den Driessche, and J. Wu, eds.), Lecture Notes in Math., vol. 1945, Springer, Berlin, Heidelberg, 2008, pp. 159--178.

\bibitem{van2017reproduction}
P. Van den Driessche, \emph{Reproduction numbers of infectious disease models}, Infect. Dis. Model. \textbf{2} (2017), no. 3, 288--303.

\bibitem{varga1962iterative}
R. S. Varga, \emph{Iterative analysis}, Prentice-Hall, Englewood Cliffs, NJ, 1962.

\bibitem{wang2008singularly}
L. Wang and E. D. Sontag, \emph{Singularly perturbed monotone systems and an application to double phosphorylation cycles}, J. Nonlinear Sci. \textbf{18} (2008), 527--550.

\end{thebibliography}

\appendix
\section{Proof of Theorem \ref{thm:main}}
\label{app:a}

We introduce the following matrix structures (see \cite{VANDENDRIESSCHE2002,van2008further}) and result (see Theorem 6.2.3 in \cite{berman1994nonnegative}, also stated as a lemma in \cite{van2008further}).

\begin{definition}
\label{matrices1}
A matrix $A \in \mathbb{R}^{n \times n}$ is an \textbf{$M$-matrix} if $A$ is a $Z$-matrix with $s \geq \rho(B)$, where $\rho(B)$ is the spectral radius of $B$ (equivalently, $A$ is a matrix with $a_{ij} \leq 0$ for $i \not= j$ and all eigenvalues having positive real part).
\end{definition}

\begin{lemma}
\label{lemma3}
If $A \in \mathbb{R}^{n \times n}$ is a $Z$-matrix, then $A^{-1} \geq 0$ if and only if $A$ is a nonsingular $M$-matrix.
\end{lemma}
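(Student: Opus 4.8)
The plan is to reduce Lemma~\ref{lemma3} to the standard regular‑splitting / Neumann‑series characterization of nonsingular $M$‑matrices. First I would fix a scalar $s \ge \max_i a_{ii}$ and write $A = sI - B$; since $A$ is a $Z$‑matrix the off‑diagonal entries of $B$ are $\ge 0$, and the diagonal entries $s - a_{ii}$ are $\ge 0$ by the choice of $s$, so $B \ge 0$. (The degenerate case $A = sI$, where $B$ is the zero matrix, is dispatched separately: then $A^{-1} = s^{-1} I \ge 0$ and $A$ is trivially a nonsingular $M$‑matrix.) Setting $C = s^{-1} B \ge 0$, so that $A = s(I - C)$ and $A^{-1} = s^{-1}(I - C)^{-1}$ whenever the inverse exists, the claim becomes: $I-C$ is invertible with $(I-C)^{-1} \ge 0$ if and only if $\rho(C) < 1$, which is exactly the condition $s > \rho(B)$ characterizing $A$ as a nonsingular $M$‑matrix (equivalently, all eigenvalues of $A$ having positive real part).

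For the ``if'' direction, assume $\rho(C) < 1$. Then the Neumann series $\sum_{k \ge 0} C^k$ converges to $(I-C)^{-1}$, and every partial sum is a nonnegative matrix because $C \ge 0$; hence $(I-C)^{-1} \ge 0$, and therefore $A^{-1} = s^{-1}(I-C)^{-1} \ge 0$.

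For the ``only if'' direction, assume $A$ is nonsingular with $A^{-1} \ge 0$, so that $(I-C)^{-1} = s A^{-1} \ge 0$. By the Perron–Frobenius theorem applied to the nonnegative matrix $C$, there is a vector $\mathbf{v} \ge 0$, $\mathbf{v} \ne \mathbf{0}$, with $C\mathbf{v} = \rho(C)\mathbf{v}$, whence $(I-C)\mathbf{v} = (1-\rho(C))\mathbf{v}$. If $\rho(C) = 1$ this reads $(I-C)\mathbf{v} = \mathbf{0}$ with $\mathbf{v} \ne \mathbf{0}$, contradicting invertibility of $I-C$. If $\rho(C) > 1$, applying $(I-C)^{-1} \ge 0$ gives $\mathbf{v} = (1-\rho(C))(I-C)^{-1}\mathbf{v}$, whose right‑hand side is $\le \mathbf{0}$ (a negative scalar times a nonnegative vector) while the left‑hand side is $\ge \mathbf{0}$ and nonzero, a contradiction. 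Hence $\rho(C) < 1$, i.e.\ $s > \rho(B)$, so $A$ is a nonsingular $M$‑matrix. This is essentially Theorem~6.2.3 of~\cite{berman1994nonnegative}, recorded here for completeness.

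I expect the only real obstacle to be expository rather than mathematical: one must pin down precisely which characterization of ``nonsingular $M$‑matrix'' is being used (the Definition stated just above is $s \ge \rho(B)$, so ``nonsingular'' must be read as the strict inequality $s > \rho(B)$), and one must keep the Perron–Frobenius step airtight in the boundary case $\rho(C) = 1$, where the contradiction comes from invertibility of $I-C$ and not from positivity of its inverse. A secondary bookkeeping point is the paper's convention that a ``nonnegative matrix'' is additionally nonzero; this holds automatically for $A^{-1}$ and for the Neumann sum since both are invertible, but the splitting $A = sI$ should be handled at the outset so that $B$ is never the zero matrix.
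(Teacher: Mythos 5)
Your proof is correct, but note that the paper itself does not prove Lemma~\ref{lemma3}: it imports the statement wholesale, citing Theorem~6.2.3 of \cite{berman1994nonnegative} and the corresponding lemma in \cite{van2008further}. So there is no in-paper argument to compare against; what you have written is the standard regular-splitting proof that underlies the cited source, and it supplies a self-contained justification that the paper omits. Both directions are sound: the Neumann series $\sum_{k\ge 0}C^k$ for the ``if'' direction, and Perron--Frobenius applied to the nonnegative matrix $C$ for the ``only if'' direction, with the boundary case $\rho(C)=1$ correctly killed by invertibility of $I-C$ rather than by positivity of its inverse. Two small bookkeeping points. First, you need $s>0$ for the reduction to be sign-correct: $A^{-1}=s^{-1}(I-C)^{-1}$ only transfers nonnegativity between $A^{-1}$ and $(I-C)^{-1}$ when $s^{-1}>0$, and your degenerate case ``$A=sI$ implies $A^{-1}=s^{-1}I\ge 0$'' is false for $s\le 0$. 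This is harmless because you are free to take $s>\max\{0,\max_i a_{ii}\}$, and indeed taking the inequality strict makes every diagonal entry of $B$ positive, so $B\ne 0$ and the degenerate branch never arises; it is cleaner to fix such an $s$ at the outset than to split cases. Second, you are right that the paper's Definition~\ref{matrices1} of ``$M$-matrix'' is stated loosely (it never introduces $s$ and $B$, and writes $s\ge\rho(B)$ where the nonsingular case needs $s>\rho(B)$); your reading --- nonsingular $M$-matrix means $A=sI-B$ with $B\ge 0$ and $s>\rho(B)$, equivalently all eigenvalues of $A$ have positive real part --- is the intended one and is the version actually used in the proof of Theorem~\ref{thm:main}.
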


We are now prepared to prove Theorem \ref{thm:main}.

\begin{proof}[Proof of Theorem \ref{thm:main}]
Consider a chemical reaction network $(\mathcal{S}, \mathcal{R})$ and an $\mathcal{X}$-free boundary steady state $\mathbf{x}^* = ( \mathbf{0}, \tilde{\mathbf{y}}^*) \in E_{\mathcal{X}}$. Define $\mathbf{f}(\tilde{\mathbf{x}},\tilde{\mathbf{y}})$ and $\mathbf{g}(\tilde{\mathbf{x}},\tilde{\mathbf{y}})$ as in \eqref{de2}, and $F$ and $V$ as in \eqref{FV}. The Jacobian of \eqref{de} evaluated at $\mathbf{x}^* = (\mathbf{0}, \tilde{\mathbf{y}}^*)$ is given by
\begin{equation}
\label{eq121}
J = \begin{bmatrix}
\displaystyle \frac{\partial \mathbf{f}}{\partial \tilde{\mathbf{x}}}(\mathbf{0}, \tilde{\mathbf{y}}^*) & \displaystyle \frac{\partial \mathbf{f}}{\partial \tilde{\mathbf{y}}}(\mathbf{0}, \tilde{\mathbf{y}}^*) \\
\displaystyle \frac{\partial \mathbf{g}}{\partial \tilde{\mathbf{x}}}(\mathbf{0}, \tilde{\mathbf{y}}^*) & \displaystyle \frac{\partial \mathbf{g}}{\partial \tilde{\mathbf{y}}}(\mathbf{0}, \tilde{\mathbf{y}}^*) \\
\end{bmatrix} = \begin{bmatrix}
\displaystyle \frac{\partial \mathcal{F}}{\partial \tilde{\mathbf{x}}}(\mathbf{0}, \tilde{\mathbf{y}}^*) - \frac{\partial \mathcal{V}}{\partial \tilde{\mathbf{x}}}(\mathbf{0}, \tilde{\mathbf{y}}^*) & 0 \\
\displaystyle \frac{\partial \mathbf{g}}{\partial \tilde{\mathbf{x}}}(\mathbf{0}, \tilde{\mathbf{y}}^*) & \displaystyle \frac{\partial \mathbf{g}}{\partial \tilde{\mathbf{y}}}(\mathbf{0}, \tilde{\mathbf{y}}^*) \\
\end{bmatrix} = \begin{bmatrix}
F-V & 0 \\
J_{12} & J_{22}
\end{bmatrix}
\end{equation}
where $\displaystyle \frac{\partial \mathbf{f}}{\partial \tilde{\mathbf{y}}}(\mathbf{0}, \tilde{\mathbf{y}}^*) = 0$ because $f(\mathbf{0},\tilde{\mathbf{y}}) = 0$ for all $\tilde{\mathbf{y}}$, by properties of siphons (specifically, that no species $X_i \in \mathcal{X}$ may be created without being present in the reactant). By the block structure of \eqref{eq121}, to determine the stability or instability of $\mathbf{x}^*$, it is sufficient to consider the eigenvalues and eigenvectors of $F-V$ and $J_{22}$.

Consider $F-V$. Since $V^{-1} \geq 0$ and $V$ is a $Z$-matrix by assumption, Lemma \ref{lemma3} implies that $V$ is a nonsingular $M$-matrix. Suppose that all the eigenvalues of $V-F$ have positive real part. Since $F \geq 0$ by construction and $V$ is a $Z$-matrix, it follows that $V-F$ is a $Z$-matrix. Thus, $V-F$ is a nonsingular $M$-matrix, and Lemma \ref{lemma3} gives $(V-F)^{-1} \geq 0$. Therefore,
\begin{equation}
\label{eq:200}
0 \leq I + F(V-F)^{-1} = (V-F+F)(V-F)^{-1} = V(V-F)^{-1} = [(V-F)V^{-1}]^{-1} = (I-FV^{-1})^{-1}.
\end{equation}
Since $F \geq 0$ and $V^{-1} \geq 0$, we have that $I-FV^{-1}$ is a $Z$-matrix and hence a nonsingular $M$-matrix by Lemma \ref{lemma3}. Consequently, the eigenvalues of $I-FV^{-1}$ have positive real part. Since $\lambda$ is an eigenvalue of $A$ if and only if $1-\lambda$ is an eigenvalue of $I-A$, it follows that $\mathscr{R}_{\mathbf{x}^*} = \rho(FV^{-1}) < 1$.

Now suppose $\rho(FV^{-1}) < 1$. Then the eigenvalues of $I-FV^{-1}$ have positive real part. Since $I-FV^{-1}$ is a $Z$-matrix, it is a nonsingular $M$-matrix, so $(I-FV^{-1})^{-1} \geq 0$. Since $V^{-1} \geq 0$, this implies
\begin{equation}
\label{eq:201}
0 \leq V^{-1}(I-FV^{-1})^{-1} = [(I-FV^{-1})V]^{-1} = (V-F)^{-1}.
\end{equation}
Since $V-F$ is a $Z$-matrix, it follows by Lemma \ref{lemma3} that $V-F$ is a nonsingular $M$-matrix. Hence, the eigenvalues of $V-F$ have positive real part. It follows that $J$ has eigenvalues with negative real part if and only if $\mathscr{R}_{\mathbf{x}^*} = \rho(FV^{-1}) < 1$. 

To show that $J$ has an eigenvalue with positive real part if and only if $\mathscr{R}_{\mathbf{x}^*} > 1$, we apply a continuity argument. If the eigenvalues of $V-F$ have nonnegative real part, then $V-F+\epsilon I$ is a nonsingular $M$-matrix for any $\epsilon > 0$. It follows by \eqref{eq:200} that $\mathscr{R}_{\mathbf{x}^*} = \rho(F(V+\epsilon I)^{-1}) < 1$. Taking $\epsilon \to 0$ gives $\rho(FV^{-1}) \leq 1$ by continuity of eigenvalues. Conversely, if $\rho(FV^{-1}) \leq 1$, then $(1+\epsilon)I-FV^{-1}$ is a nonsingular $M$-matrix for any $\epsilon > 0$. It follows by \eqref{eq:201} that the eigenvalues of $(1+\epsilon)V-F$ have positive real part. Taking $\epsilon \to 0$ gives that the eigenvalues of $V-F$ have nonnegative real part by continuity of eigenvalues. Therefore, $\mathscr{R}_{\mathbf{x}^*} \leq 1$ if and only if the eigenvalues of $J$ have nonpositive real part. The contrapositive yields that $\mathscr{R}_{\mathbf{x}^*} > 1$ if and only if $J$ has an eigenvalue with positive real part.

Now consider $J_{22}$. If there are no conservation laws, then all eigenvalues of $J_{22}$ have negative real part by assumption. Therefore, by the block structure of \eqref{eq121}, if $\mathscr{R}_{\mathbf{x}^*}<1$, then all eigenvalues of $J$ have negative real part, and stability follows. Conversely, if $\mathscr{R}_{\mathbf{x}^*} > 1$, then $J$ has at least one eigenvalue with positive real part, and instability follows.

If the number of conservation laws equals the number of elements in the antisiphon (i.e., $|\mathcal{Y}| = n - s$) and $W_{\mathcal{Y}}$ is invertible, then \eqref{de2} reduces to the system
\[
\left\{ \; \; \;
    \begin{aligned}
 \frac{d\tilde{\mathbf{x}}}{dt} & = \mathbf{f}(\tilde{\mathbf{x}},\tilde{\mathbf{y}}) \\
    \tilde{\mathbf{y}} & = W_{\mathcal{Y}}^{-1} \left( \mathbf{\Lambda} - W_{\mathcal{X}} \tilde{\mathbf{x}} \right).
    \end{aligned}
\right.
\]
Since $W_{\mathcal{Y}}$ is invertible and $\mathbf{\Lambda}$ uniquely determines the stoichiometric compatibility class, there is a unique $\mathcal{X}$-free boundary steady state $\mathbf{x}^* = (\mathbf{0}, \tilde{\mathbf{y}}^*) = (\mathbf{0}, W_{\mathcal{Y}}^{-1} \mathbf{\Lambda})$ in each stoichiometric compatibility class. The Jacobian evaluated at $\mathbf{x}^*$ is
\[
J(\mathbf{0},W_{\mathcal{Y}}^{-1} \mathbf{\Lambda}) = \frac{\partial \mathbf{f}}{\partial \tilde{\mathbf{x}}}(\mathbf{0},W_{\mathcal{Y}}^{-1} \mathbf{\Lambda}) + \frac{\partial \mathbf{f}}{\partial \tilde{\mathbf{y}}}(\mathbf{0},W_{\mathcal{Y}}^{-1} \mathbf{\Lambda}) \frac{\partial \tilde{\mathbf{y}}}{\partial \tilde{\mathbf{x}}} = F - V
\]
since $\displaystyle \frac{\partial \mathbf{f}}{\partial \tilde{\mathbf{y}}}(\mathbf{0},W_{\mathcal{Y}}^{-1} \mathbf{\Lambda}) = \mathbf{0}$, again because $\mathcal{X}$ is a siphon. If $\mathscr{R}_{\mathbf{x}^*} < 1$, then all eigenvalues of $J$ have negative real part, so that $\tilde{\mathbf{x}}(t) \to \mathbf{0}$ and $\tilde{\mathbf{y}}(t) \to W_{\mathcal{Y}}^{-1} \mathbf{\Lambda}$ by the second equation. Consequently, $\mathbf{x}^* = (\mathbf{0},W_{\mathcal{Y}}^{-1} \mathbf{\Lambda})$ is locally asymptotically stable. If $\mathscr{R}_{\mathbf{x}^*} > 1$, then the steady state is unstable.
\end{proof}

\section{Proof of Theorem \ref{thm:2}}
\label{app:b}

To prove Theorem \ref{thm:2}, we connect the matrices $V$ and $V^{-1}$ to the transition probability rate and fundamental matrices from the theory of continuous-time Markov chains. We introduce the following definitions (see \cite{norris1998markov}).

\begin{definition}
\label{generator}
Let \( \{ X(t), t \geq 0 \} \) be a continuous-time Markov chain with finite state space \( S = \{ 1, \ldots, n \} \). We have the following:
\begin{enumerate}
\item
A state $i \in S$ is called \textbf{recurrent} if a sequence of transitions from $i$ to $j$ where $j \in S$ implies that there is a sequence of transitions from $j$ to $i$. The set of \textbf{recurrent states} is denoted \( R \subseteq S \).
\item 
A state $i \in S$ is called \textbf{transient} if there is a sequence of transitions from $i$ to $j$ where $j \in S$ such that there is no sequence of transitions from $j$ to $i$. The set of \textbf{transient states} is denoted \( T \subseteq S \).
\item
The \textbf{generator matrix} \( Q = (q_{ij}) \), where \( q_{ij} \geq 0 \) for $i \not= j$ and \( q_{ii} = -\sum_{j \neq i} q_{ij} \), represents the transition rates from \( i \) to \( j \). The generator matrix \( Q \) can be partitioned as:
\begin{equation}
\label{Q}
Q =
\begin{bmatrix}
Q_{TT} & Q_{TR} \\
0 & Q_{RR}
\end{bmatrix}
\end{equation}
where \( Q_{TT} \) corresponds to transitions between transient states, \( Q_{TR} \) to transitions from transient to recurrent states, and \( Q_{RR} \) to transitions between recurrent states.
\item
The \textbf{fundamental matrix} \( N \) for the transient states is given by:
\[
N = (-Q_{TT})^{-1}
\]
where the entry \( n_{ij} \geq 0 \) corresponds to the expected time spent in state $j$ when starting in state $i$ before leaving the transient component.
\end{enumerate}
\end{definition}

We now make the correspondence between the matrices $A_i$ in \eqref{blocktriangular} and the negative transpose of the generator matrix $Q$ of a finite-state Markov chain restricted to the transient states (i.e., $(-Q_{TT})^T$). Specifically, we note that $(-Q_{TT})^T$ is a $Z$-matrix by construction. Furthermore, since $Q \mathbbm{1} = \mathbf{0}$ and there is at least one transition from $T$ to $R$ so that $Q_{TR} \geq 0$, it follows from \eqref{Q} that $Q_{TT} \mathbbm{1} + Q_{TR} \mathbbm{1} = 0$. Hence, $Q_{TT} \mathbbm{1} = - Q_{TR} \mathbbm{1} \leq 0$, and therefore $\mathbbm{1}^T (-Q_{TT})^T \geq 0$.

Consequently, the matrices $A_i$ in \eqref{blocktriangular} satisfying the conditions of Corollary \ref{thm:2} (i.e., $A_i$ is a $Z$-matrix and $\mathbbm{1}^T A_i \geq 0$) have the same structure as the negative transpose of the generator matrix $Q$ of a finite-state Markov chain restricted to the transient states (i.e., $A_i = (-Q_{TT})^T$ for some finite-state Markov chain). The following result follows from Theorem 1 of \cite{plemmons1977m} and \cite{kemeny1976absorbing}.

\begin{lemma}
\label{lemma11}
If $A \in \mathbb{R}^{n \times n}$ is a $Z$-matrix satisfying $\mathbbm{1}^T A \geq 0$, then $A$ is a nonsingular $M$-matrix (i.e., $A^{-1} \geq 0$).
\end{lemma}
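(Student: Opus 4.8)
The plan is to reduce the claim, via Lemma~\ref{lemma3}, to showing that $A$ is a nonsingular $M$-matrix, and then to pin down the one spectral inequality that makes this work. Write $s = \max_i a_{ii}$ and $B = sI - A$. Since $A$ is a $Z$-matrix, $b_{ij} = -a_{ij} \geq 0$ for $i \neq j$ and $b_{ii} = s - a_{ii} \geq 0$, so $B$ is an entrywise nonnegative matrix, and by the definition of an $M$-matrix, $A$ is a nonsingular $M$-matrix precisely when $s > \rho(B)$; in that case $A^{-1} = \sum_{k\geq 0} s^{-(k+1)} B^k \geq 0$ and Lemma~\ref{lemma3} applies. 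So everything reduces to proving the strict bound $\rho(B) < s$.

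Next I would translate the hypothesis $\mathbbm{1}^T A \geq 0$: it says every column sum of $A$ is nonnegative, with at least one strictly positive, i.e. $\sum_i b_{ij} = s - \sum_i a_{ij} \leq s$ for all $j$ with $\sum_i b_{ij} < s$ for at least one $j$. This already gives $\rho(B) \leq \|B\|_1 = \max_j \sum_i b_{ij} \leq s$; the remaining work is to upgrade this to strict inequality. Here I would invoke the irreducibility of $A$ — which is exactly the property carried by each diagonal block $A_i$ in the block-triangular form \eqref{blocktriangular}, the setting in which the lemma is used. For irreducible nonnegative $B$, the Perron–Frobenius theorem gives $\min_j \sum_i b_{ij} \leq \rho(B) \leq \max_j \sum_i b_{ij}$, with equality on either end only when all column sums of $B$ coincide; since one column sum is $< s$ and none exceeds $s$, they cannot all coincide, so $\rho(B) < s$, as desired. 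Equivalently — and this is the route suggested by the discussion preceding the lemma — one realizes $A = (-Q_{TT})^T$ for the continuous-time chain on $\{1,\dots,n\}\cup\{\star\}$ with off-diagonal rates $q_{jk} = -a_{kj}$, absorption rates $q_{j\star} = \sum_i a_{ij} \geq 0$, and $q_{jj}$ chosen so rows sum to zero; irreducibility together with the existence of a state of positive absorption rate makes $\{1,\dots,n\}$ exactly the transient class and $\{\star\}$ the unique recurrent class, so the fundamental matrix $N = (-Q_{TT})^{-1}$ is well defined with nonnegative entries (expected sojourn times), whence $-Q_{TT}$, and therefore its transpose $A$, is a nonsingular $M$-matrix by Theorem~1 of \cite{plemmons1977m} and \cite{kemeny1976absorbing}.

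The main obstacle is the strictness step: a global column-sum bound $\sum_i b_{ij} \leq s$ with strict inequality in only one column does \emph{not} force $\rho(B) < s$ in general (take $B$ block-diagonal with one block whose column sums equal $s$), so the proof genuinely needs the irreducibility of the block — equivalently, the absence of a closed subset of states from which absorption is impossible. With that structural input supplied by \eqref{blocktriangular}, both the Perron–Frobenius argument and the absorbing-Markov-chain argument close the proof cleanly.
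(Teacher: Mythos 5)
Your proposal is essentially sound and in fact contains \emph{two} arguments: the Perron--Frobenius column-sum bound on $B = sI - A$, and the absorbing-Markov-chain realization $A = (-Q_{TT})^T$. The second of these is exactly the paper's proof (which simply asserts that $A$ is the negative transpose of a generator restricted to its transient states and cites Theorem~1 of \cite{plemmons1977m} together with \cite{kemeny1976absorbing}); the first is a more self-contained alternative that avoids the Markov-chain detour entirely. Your most valuable observation is the one about strictness: the lemma \emph{as stated} is false without irreducibility --- take $A = \left[\begin{smallmatrix} 0 & 0 \\ 0 & 1 \end{smallmatrix}\right]$, which is a $Z$-matrix with $\mathbbm{1}^T A = (0,1) \geq 0$ under the paper's convention, yet is singular. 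The paper's own proof silently needs the same repair, since declaring all states of the associated chain transient presupposes that every state can reach the absorbing state, which is precisely what irreducibility (plus one strictly positive column sum) supplies and what the stated hypotheses do not. Since the lemma is only ever applied to the irreducible diagonal blocks $A_i$ of \eqref{blocktriangular}, the intended statement should carry an irreducibility hypothesis, and you are right to import it.

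One small repair to your Perron--Frobenius step: the claim that the column sums of $B$ ``cannot all coincide'' does not follow from your hypotheses --- they could all equal a common value $c < s$ (e.g.\ when every column of $A$ has the same positive sum). The argument still closes, just by cases: if all column sums of $B$ coincide at $c$, then $c < s$ because at least one is $< s$, and $\rho(B) = c < s$ directly; if they do not all coincide, irreducibility gives $\rho(B) < \max_j \sum_i b_{ij} \leq s$. Either way $\rho(B) < s$ and $A$ is a nonsingular $M$-matrix, so with that one sentence fixed and irreducibility added to the hypotheses, your proof is complete and arguably tighter than the paper's.
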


\begin{proof}
If $A \in \mathbb{R}^{n \times n}$ is a $Z$-matrix and $\mathbbm{1}^T A \geq 0$, then $A = (-Q_{TT})^T$ where $-Q_{TT}$ is the negative transpose of the generator of a Markov chain restricted to the transient states. From Theorem 1 of \cite{plemmons1977m}, we have $(-Q_{TT})^{-1} \geq 0$, so that $A^{-1} = ((-Q_{TT})^{-1})^T \geq 0$. Since $A$ is a $Z$-matrix, it follows from Lemma \ref{lemma3} that $A$ is an $M$-matrix, and we are done.
\end{proof}

For models of infectious disease spread, the entries $[V^{-1}]_{ij}$ have been interpreted as the expected ``dwell time'' an individual spends in infectious state $i$ when starting in infectious state $j$ before leaving the infectious states (i.e., before recovering) \cite{VANDENDRIESSCHE2002,van2017basic}. The product $FV^{-1}$ therefore corresponds to the combined ``force of infection'' ($F$) and time spent in each infectious state ($V^{-1}$). For biochemical reaction networks, we are not tracking individuals transitioning through infectious states, so our interpretation of the matrices $A_i^{-1}$ will not be as intuitive.

We are now prepared to prove Theorem \ref{thm:2}.

\begin{proof}[Proof of Theorem \ref{thm:2}]
Suppose $V$ is block triangularizable with block triangular form \eqref{blocktriangular} where $J_{ij} \leq 0$ and, for each $A_i$, $i = 1, \ldots, n$, $A_i$ is a $Z$-matrix and $\mathbbm{1}^T A_i \geq 0$. Each $A_i$ is an $M$-matrix by Lemma \ref{lemma11}, and consequently every eigenvalue of every $A_i$ has positive real part. Since permutation is a similarity transformation, $V$ has the same eigenvalues as its block triangular form. It follows from the block structure of \eqref{blocktriangular} that $\sigma(V) = \bigcup_{i=1}^n \sigma(A_i)$ and $\rho(V) = \max_{i = 1, \ldots, n} \{ \rho(A_i) \}$. Since the $A_i$ are $Z$-matrices and $J_{ij} \leq 0$, we have that $V$ is a $Z$-matrix, and thus $V$ is a nonsingular $M$-matrix. It follows from Lemma \ref{lemma3} that $V^{-1} \geq 0$, so Theorem \ref{thm:main} applies, and we are done.
\end{proof}

\section{Proof of Theorem \ref{thm:3}}
\label{app:c}

We now prove Theorem \ref{thm:3}.

\begin{proof}[Proof of Theorem \ref{thm:3}]
Consider the $\mathcal{X}$-reduced network $G_{\mathcal{X}} = (V_{\mathcal{X}},E_{\mathcal{X}} )$ of a chemical reaction network $(\mathcal{S},\mathcal{R})$ with a critical siphon $\mathcal{X}$.  Suppose that kinetic assumptions \textbf{(A1)}-\textbf{(A4)} are satisfied. Suppose further that there is a splitting of $G_{\mathcal{X}} = G^{\mathcal{F}}_{\mathcal{X}} \cup G^{\mathcal{V}}_{\mathcal{X}}$ where $G^{\mathcal{F}}_{\mathcal{X}} = (V_{\mathcal{X}},E^{\mathcal{F}}_{\mathcal{X}})$ and $G^{\mathcal{V}}_{\mathcal{X}} = (V_{\mathcal{X}},E^{\mathcal{V}}_{\mathcal{X}})$ satisfy Conditions 1-4 of Theorem \ref{thm:3}. We show that we can construct vectors $\mathcal{F}$ and $\mathcal{V}$ so that the matrices $F$ and $V$ given by \eqref{FV} satisfy $F \geq 0$, $V$ is a $Z$-matrix, and $V$ is block triangularizable with block triangular form \eqref{blocktriangular} where $J_{ij} \leq 0$ and, for each $A_i$, $i = 1, \ldots, n$, $A_i$ is a $Z$-matrix and $\mathbbm{1}^T A_i \geq 0$. That is, we show that $F$ and $V$ satisfy the assumptions of Theorem \ref{thm:2} so that $V^{-1} \geq 0$ follows. 
We now assess the required choices for $\mathcal{F}$ and $\mathcal{V}$.

To $\mathcal{F}$ we add all reaction rates in \eqref{de2} associated with targets of edges in $E^{\mathcal{F}}_{\mathcal{X}}$. Since $E^{\mathcal{F}}_{\mathcal{X}} \not= \emptyset$ by Condition 1 of Theorem \ref{thm:3}, we have $\mathcal{F} \geq 0$. By kinetic assumption \textbf{(A4)}, it follows that $F \geq 0$.

To $-\mathcal{V}$ we add all reaction rates associated with edges in $E^{\mathcal{V}}_{\mathcal{X}}$, which correspond to the remaining terms in \eqref{de2}. Condition 3 of Theorem \ref{thm:3} guarantees that all terms corresponding to autocatalytic reactions (i.e. those which increase the amount of $X_i$ as $X_i$ is increased) are in $\mathcal{F}$ rather than $-\mathcal{V}.$ All remaining terms correspond to edges of the form $X_i \to X_j$, for $X_i \not= X_j$, or $X_i \to \emptyset$. Condition 4(a) of Theorem \ref{thm:3}, together with kinetic assumption \textbf{(A4)}, guarantees that each reaction rate $R(\tilde{\mathbf{x}},\tilde{\mathbf{y}})$ contributes a single term $\displaystyle \frac{\partial R(\mathbf{0},\tilde{\mathbf{y}}^*)}{\partial x_i}>0$ to the $(i,i)$ entry of $V$ and as many terms $\displaystyle \frac{\partial R(\mathbf{0},\tilde{\mathbf{y}}^*)}{\partial x_i}<0$ to the off-diagonal $(i,j)$ entries of $V$ as there are edges $X_i \to X_j$ associated with the reaction rate $R(\tilde{\mathbf{x}},\tilde{\mathbf{y}})$ in $E_{\mathcal{X}}^{\mathcal{V}}$. By Condition 4(b) of Theorem \ref{thm:3}, the entries of $V$ are non-positive on the off-diagonal so that $V$ is a $Z$-matrix.




We now consider the strongly connected components of $G_{\mathcal{X}}^{\mathcal{V}}$. After permutation, these correspond to the irreducible diagonal blocks $A_i$ in \eqref{blocktriangular}. By Condition 4(b) of Theorem \ref{thm:3}, every column of $A_i$ can have at most one non-zero off-diagonal term corresponding to a given reaction. If there is a non-zero off-diagonal term corresponding to this reaction rate, it is the same term as appears on the diagonal but opposite sign and the only addition edges correspond to $X_i \to \emptyset$ which add a positive term to the diagonal. It follows that $\mathbbm{1}^T A_i$ is nonnegative in every entry. Furthermore, since any strongly connected component containing an $X_i$ is non-terminal, there must be either an edge $X_i \to \emptyset$ or an edge $X_i \to X_j$ where $X_j$ belongs to another strongly connected component. Since this term contributes no term to the off-diagonal entries of $A_i$, we must have that there is at least one column of $A_i$ for which there is a strictly positive column sum. It follows that, for every $A_i$, $\mathbbm{1}^T A_i \geq 0$. Consequently, $V^{-1} \geq 0$ by Theorem \ref{thm:2}, and we are done.
\end{proof}

\end{document}